\DeclareSymbolFont{sfoperators}{T1}{\sfdefault}{\mddefault}{\itdefault}
\renewcommand{\operator@font}{\mathgroup\symsfoperators}
\title{Lax Modal Lambda Calculi}
\author{Nachiappan Valliappan}{University of Edinburgh, United Kingdom \and \url{http://nachivpn.me} }{nachivpn@gmail.com}{https://orcid.org/0000-0002-9358-3852}{}
\authorrunning{N. Valliappan}
\keywords{intuitionistic modal logic, typed lambda calculi, diamond modality}
\begin{document}

\maketitle

\begin{abstract}
  Intuitionistic modal logics (IMLs) extend intuitionistic
  propositional logic with modalities such as the box and diamond
  connectives.
  Advances in the study of IMLs have inspired several applications in
  programming languages via the development of corresponding type
  theories with modalities.
  Until recently, IMLs with diamonds have been misunderstood as
  somewhat peculiar and unstable, causing the development of type
  theories with diamonds to lag behind type theories with boxes.
  In this article, we develop a family of typed\hyp{}lambda calculi
  corresponding to sublogics of a peculiar IML with diamonds known as
  Lax logic.
  These calculi provide a modal logical foundation for various strong
  functors in typed-functional programming.
  We present possible\hyp{}world and categorical semantics for these
  calculi and constructively prove normalization, equational
  completeness and proof\hyp{}theoretic inadmissibility results.
  Our main results have been formalized using the proof assistant Agda.
\end{abstract}

\section{Introduction}
\label{sec:intro}

In modal logic, a modality is a unary logical connective that exhibits
some logical properties.
Two such modalities are the connectives $\BoxTy{}$ (``box'') and
$\DiaTy{}$ (``diamond'').
Intuitively, a formula $\BoxTy{A}$ can be understood as ``necessarily
$A$'' and a formula $\DiaTy{A}$ as ``possibly $A$''.
In classical modal logic, the most basic logic~\CK extends classical
propositional logic (\CPL) with the box modality, the
\emph{necessitation} rule (if $A$ is a theorem then so is $\BoxTy{A}$)
and the \KA axiom ($\BoxTy{(A \FunTy B)} \FunTy \BoxTy{A} \FunTy
\BoxTy{B}$).
The diamond modality can be encoded in this logic as a dual of the box
modality: $\DiaTy{A} \equiv \neg{\BoxTy{\neg{A}}}$.
That is, $\DiaTy{A}$ is true if and only if $\neg{\BoxTy{\neg{A}}}$ is
true.
%

%
%

In intuitionistic modal logic (IML), there is no consensus on one
logic as the most basic logic.
We instead find a variety of different IMLs based on different
motivations.
The $\BoxTy$ and $\DiaTy$ modalities are independent connectives in
IML \cite[Requirement 5]{Simpson94a}, just as $\land$ and $\lor$ are
independent connectives that are not inter\hyp{}definable in
intuitionistic propositional logic (\IPL).
In contrast to $\BoxTy{}$, however, the logical properties of
$\DiaTy{}$ vary widely in IML literature.
This has misconstrued $\DiaTy{}$ as a controversial and unstable
modality.
It had been incorrectly assumed until recently that
several IMLs with both $\BoxTy{}$ and $\DiaTy{}$ coincided (i.e. were
conservative extensions of their sublogics) only in the
$\DiaTy$\hyp{}free fragment, suggesting some sort of stability of
$\BoxTy{}$\hyp{}only logics.
Fortunately, misconceptions concerning intuitionistic diamonds have
been broken in recent results~\cite{DasM23,GrootSC25} and we are
approaching a better understanding of it.

Advances in IML have led to a plethora of useful applications in
programming languages through the development of corresponding type
theories with modalities.
Modal lambda calculi~\cite{PfenningD01,Clouston18} with box modalities
have found applications in staged
meta\hyp{}programming~\cite{DaviesP01,NanevskiPP08,HuP24}, reactive
programming~\cite{BahrGM19}, safe usage of temporal
resources~\cite{Ahman23} and checking productivity of recursive
definitions~\cite{BizjakGCMB16}.
Two particular box axioms that have received plenty of attention in
these developments are the axioms~$\TA : \BoxTy{A} \FunTy A$ and
$\FourA : \BoxTy{A} \FunTy \BoxTy{\BoxTy{A}}$.
Dual\hyp{}context modal calculi~\cite{PfenningD01,Kavvos17} which
admit one or both of these axioms are well\hyp{}understood.
These calculi enjoy a rich meta\hyp{}theory, including confluent
reduction, normalization and a comprehensive analysis of provability.
Fitch\hyp{}style modal lambda calculi~\cite{Clouston18} admitting
axioms~$\TA$ and $\FourA$ further enjoy an elegant categorical
interpretation, possible\hyp{}world semantics, and results showing how
categorical models of these calculi can be constructed using
possible\hyp{}world semantics of their corresponding
logics~\cite{ValliappanRC22}.

Lambda calculi with diamond modalities in comparison have received
much less attention from the type\hyp{}theoretic perspective.
The controversy surrounding the diamond modality in IML appears to
have restricted the development of type theories with diamonds.
For example, Kavvos~\cite{Kavvos20} cites Simpson's
survey~\cite{Simpson94a} of IMLs and restricts the development of
dual\hyp{}context modal calculi ``to the better-behaved, and seemingly
more applicable box modality'' arguing that the ``computational
interpretation [of $\DiaTy{}$] is not very crisp''.
Recent breakthroughs in intuitionistic modal logic have made it clear
that diamonds are no more problematic than boxes.
In this article, we further the type\hyp{}theoretic account of a
special class of diamond modalities with compelling applications in
programming languages.

\emph{Propositional lax logic} (\LL) is an intuitionistic modal logic
introduced independently by Fairtlough and
Mendler~\cite{FairtloughM97} and Benton, Bierman and de
Paiva~\cite{BentonBP98}.
\LL extends \IPL with a diamond modality~$\DiaTy{}$, known as the lax
modality, which exhibits a peculiar modal axiom~\SA (for
``strength''), in addition to axioms~\RA (for ``return'') and~\JA (for
``join'') that are well\hyp{}known in classical modal logic as duals
to the box axioms~$\TA$ and $\FourA$ respectively.
\begin{align*}
  \SA : A \ProdTy \DiaTy{B} \FunTy \DiaTy{(A \ProdTy B)} & &
  \RA : A \FunTy \DiaTy{A} & &
  \JA : \DiaTy{\DiaTy{A}} \FunTy \DiaTy{A}
\end{align*}
It is known that \LL corresponds to a typed\hyp{}lambda calculus (we
call \MLC) known as Moggi's \emph{monadic
  metalanguage}~\cite{Moggi91}, which models side effects in
functional programming using \emph{strong monads} from category
theory.
Benton, Bierman and de Paiva~\cite{BentonBP98}, and later Pfenning and
Davies~\cite{PfenningD01}, show that a judgment is provable in a
natural deduction proof system for \LL if and only if there exists a
typing derivation for its corresponding judgment in \MLC.
However, in contrast to the comprehensive treatment of box modalities
mentioned above, there remain several gaps in our understanding of the
lax modality:

\begin{enumerate}
\item It has remained unclear as to whether type theories can exist
  for sublogics of \LL or whether the axioms of \LL in combination
  happen to coincidentally enjoy a status of
  ``well\hyp{}behavedness''. What happens if we drop one or more of
  the modal axioms~$\RA$ and $\JA$? Does a corresponding type theory
  still exist?
  
\item A satisfactory account of the correspondence between the
  possible\hyp{}world semantics of \LL and the categorical semantics
  of \MLC is still missing. In particular, how can we leverage the
  possible\hyp{}world semantics of \LL to construct models of \MLC?
\end{enumerate}

The first objective of this article is to develop corresponding type
theories for sublogics of \LL that drop one or both of axioms~\RA and
\JA.
From the type\hyp{}theoretic perspective, this corresponds to type
theories for non\hyp{}monadic strong functors, which are prevalent in
functional programming.
For example, in Haskell, the array data type (in \texttt{Data.Array})
is a strong functor that neither exhibits return (axiom~\RA) nor join (axiom~\JA). %
Several other Haskell data types exhibit
return\footnote{\url{https://hackage.haskell.org/package/pointed-5.0.5/docs/Data-Pointed.html}}
or
join\footnote{\url{https://hackage.haskell.org/package/semigroupoids-6.0.1/docs/Data-Functor-Bind.html\#g:4}},
but not
both\footnote{\url{https://wiki.haskell.org/Why_not_Pointed\%3F}}.
We are interested in developing a uniform modal logical foundation for
the axioms of non-monadic strong functors.

The second objective of this article is to study the connection
between possible\hyp{}world semantics of \LL and its sublogics and
categorical models of their corresponding type theories.
Possible\hyp{}world semantics for logics are concerned with
provability of formulas and not about proofs themselves.
Categorical models of lambda calculi, on the other hand, distinguish
different proofs (terms) of the same proposition (type).
Mitchell and Moggi~\cite{MitchellM91} show the connection between
these two different semantics using a categorical refinement of
possible\hyp{}world semantics for the simply\hyp{}typed lambda
calculus (\STLC).
They note that their refined semantics, which we shall call
\emph{proof\hyp{}relevant possible\hyp{}world semantics}, makes it
``easy to devise Kripke counter\hyp{}models'' since they ``seem to
support a set\hyp{}like intuition about lambda terms better than
arbitrary cartesian closed categories''.
We wish to achieve this technical convenience in model construction
for all the modal lambda calculi in this article.

Towards our first objective, we formulate three new modal lambda
calculi as subsystems of \MLC: \SLC, \RLC, \JLC.
The calculus~\SLC models \emph{strong} functors and corresponds to a
logic~\SL (for ``S\hyp{}lax Logic'') that admits axiom~\SA, but
neither \RA nor \JA.
The calculus~\RLC models strong \emph{pointed} functors and
corresponds to a logic~\RL (for ``SR\hyp{}lax Logic'') that admits
axioms~\SA and~\RA, but not~\JA.
The calculus~\JLC models strong \emph{semimonads} and corresponds to a
logic~\JL (for ``SJ\hyp{}lax Logic'') that admits axioms~\SA and~\JA,
but not~\RA.
We refer to all four calculi collectively as \emph{lax modal lambda
  calculi}.
Towards our second objective, we extend Mitchell and Moggi's
proof\hyp{}relevant possible\hyp{}world semantics to lax modal lambda
calculi and show that it is complete for their equational theories.
We further show that all four calculi are normalizing by constructing
\emph{Normalization by Evaluation} models as instances of
possible\hyp{}world semantics and prove completeness and
inadmissibility results as corollaries.
All the theorems in this article have been verified
correct~\cite{Valliappan25} using the proof assistant
Agda~\cite{Agda2}.

\section{Overview  of \LL and its corresponding lambda calculus~\MLC}
\label{sec:overview}

In this section, we define the syntax and semantics of \LL and its
sublogics as extensions of the \emph{negative}, i.e. disjunction and
absurdity\hyp{}free, fragment of \IPL.
This section is a summary of the background presumed in this article
and is based on previously published
work~\cite{FairtloughM97,Moggi91}.
%
%

\subsection{Syntax and semantics of \LL}
\emph{Syntax}. The language of (the negative fragment of) \LL consists of formulas
defined inductively by propositional atoms ($p$,~$q$,~$r$, etc.), a
constant~$\UnitTy$ and logical connectives~$\ProdTy$,~$\FunTy$
and~$\DiaTy$.
The connective~$\DiaTy$ has the highest operator precedence, and is
followed by $\ProdTy$ and $\FunTy$.
Following the usual convention, we suppose that $\ProdTy$ and~$\FunTy$
associate to the right.
\begin{align*}
     \Prop\qquad A,B ::= \ p,q,r,\ldots\ |\ \UnitTy\ |\ A \ProdTy
     B\ |\ A \FunTy B\ |\ \DiaTy{A} & &
         \Ctx\qquad \Gamma, \Delta ::= \EmptyCtx\ |\ \ExtCtx{\Gamma}{A}
\end{align*}
The constant~$\UnitTy$ denotes universal truth, the binary
connectives~$\ProdTy$ and~$\FunTy$ respectively denote conjunction and
implication, and the unary connective~$\DiaTy$ denotes the lax
modality.
Intuitively, a formula~$\DiaTy{A}$ may be understood as qualifying the
truth of formula~$A$ under \emph{some} constraint.
A context~$\Gamma$ is a multiset of formulas $A_1, A_2,..., A_n$,
where $\EmptyCtx$ denotes the empty context.

A Hilbert\hyp{}style axiomatization of \LL can be given by extending
the usual axioms and rules of deduction for \IPL with the modal
axioms~\SA,~\RA, and~\JA in \cref{sec:intro}.
%





\noindent\emph{Semantics}. The possible\hyp{}world semantics of \LL defines the truth of
\LL{}\nbhyp{}formulas in a model using gadgets known as \emph{frames}.
A \LL{}\nbhyp{}frame~$F = \tuple{W,\Ri,\Rm}$ is a triple that consists of
a set~$W$ of \emph{worlds} and two reflexive\hyp{}transitive
relations~$\Ri$ (for ``intuitionistic'') and~$\Rm$ (for ``modal'') on
worlds satisfying two compatibility conditions:
\begin{itemize}
  \item Forward confluence: ${\Ri^{-1} ; \Rm} \subseteq {\Rm ; \Ri^{-1}}$
  \item Inclusion: ${\Rm} \subseteq {\Ri}$
\end{itemize}
The relation~$\Ri^{-1}$ is the converse of relation~$\Ri$, and is
defined as ${\Ri^{-1}} = \{ (y,x)\ |\ (x,y) \in {\Ri} \}$.
The operator $;$ denotes composition of relations and is defined for
two relations~$R_1$ and~$R_2$ on worlds as
$R_1 ; R_2 = \{ (x,z)\ |\ \text{there exists}\ y \in W\ \text{such
  that}\ (x,y) \in R_1\ \text{and}\ (y,z) \in R_2 \}$.

We may intuitively understand worlds as nodes in a graph denoting the
``state of assumptions'', relation~$\Ri$ as paths denoting increase in
assumptions, and relation~$\Rm$ as paths denoting constraining of
assumptions.
That is, $w \Ri w'$ denotes the increase in assumptions from world~$w$
to $w'$, and $w \Rm v$ denotes a constraining of $w$ by $v$ such that
$v$ is reachable from $w$ when the constraint can be satisfied.
Under this reading, the inclusion condition~${\Rm} \subseteq {\Ri}$
states that imposing a constraint increases assumptions.

\begin{wrapfigure}{r}{0.3\textwidth}
\begin{center}
  \begin{tikzcd}[scale=2, sep=huge]
    w \arrow[r, "\Rm"] \arrow[d, "\Ri",swap] & v \arrow[d, "\Ri",dashrightarrow]\\
    w' \arrow[r, "\Rm",dashrightarrow] & v'
  \end{tikzcd}
\end{center}
\label{fig:fc}
\end{wrapfigure}
The forward confluence condition~${\Ri^{-1} ; \Rm} \subseteq {\Rm ;
  \Ri^{-1}}$ states that constraints can be ``transported'' over an
increase in assumptions.
It can be visualized as depicted on the right, where the dotted lines
represent ``there exists''.
This condition does not appear in Fairtlough and Mendler's original
work~\cite{FairtloughM97}, but can be found in earlier work on
intuitionistic diamonds by Bo{\v z}i{\'c} and Do{\v s}en~\cite[\S
  8]{BozicD84} and Plotkin and Stirling~\cite{PlotkinS86}.
It simplifies the interpretation of $\DiaTy{}$ and is satisfied by all
the models we will construct in this article to prove completeness.
We return to the discussion on forward confluence in
\cref{sec:related}.

A model~$\Mod[M] = \tuple{F,V}$ couples a frame~$F$ with a
\emph{valuation} function~$V$ that assigns to each propositional
atom~$p$ a set~$V(p)$ of worlds hereditary in $\Ri$, i.e. if $w \Ri
w'$ and $ w \in V(p)$ then $w' \in V(p)$.
The truth of a formula in a model~$\Mod[M]$ is defined by the
\emph{satisfaction} relation~$\Vdash$ for a given world~$w \in W$ by
induction on a formula as:
\begin{equation*}
  \begin{array}{l@{\;\Vdash\;} @{\;} l @{\;\text{iff}\;}c@{\;} l}
    \Mod[M],w & p & & w \in V(p) \\
    \Mod[M],w & \UnitTy & & \text{true} \\
    \Mod[M],w & A \ProdTy B & & \Mod[M],w \Vdash A\ \text{and}\  \Mod[M],w \Vdash B \\
    \Mod[M],w & A \FunTy B & & \text{for all}\ w' \in W\ \text{such that}\ w \Ri w',\ \Mod[M],w' \Vdash A\ \text{implies}\ \Mod[M],w' \Vdash B\ \\
    \Mod[M],w & \DiaTy{A} & & \text{there exists}\ v \in W\ \text{with}\ w \Rm v\  \text{and}\ \Mod[M],v \Vdash A
  \end{array}
\end{equation*}
We write $\Mod[M],w \Vdash \Gamma$ to denote $\Mod[M], w \Vdash A_i$
for all formulas~$A_i$ with $1 \leq i \leq n $ in context~$\Gamma =
A_1, ...A_n$, and write $\Gamma \satisfies A$ to denote $\Mod[M],w
\Vdash \Gamma$ implies $\Mod[M],w \Vdash A$ for all worlds~$w$ in all
models~$\Mod[M]$.
Furthermore, we write $\Mod[M] \satisfies A$ to denote $\Mod[M],w
\Vdash A$ for all worlds~$w$ in $\Mod[M]$.

The soundness of \LL for its semantics can be shown using the
following key properties:
\begin{proposition}\label{prop:axioms}
  For an arbitrary model~$\Mod[M]$ of \LL
  \begin{itemize}
\item if $w \Ri w'$
  and $\Mod[M],w \Vdash A$ then $\Mod[M],w' \Vdash A$, for all worlds~$w,w'$ and formulas~$A$
\item $\Mod[M] \satisfies A \ProdTy \DiaTy{B} \FunTy \DiaTy{(A \ProdTy
  B)}$, for all formulas~$A,B$ 
\item $\Mod[M] \satisfies A \FunTy \DiaTy{A}$, for all formulas~$A$ 
\item $\Mod[M] \satisfies \DiaTy{\DiaTy{A}} \FunTy \DiaTy{A}$, for all formulas~$A$ 
\end{itemize}
\end{proposition}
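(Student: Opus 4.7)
The plan is to establish the four items essentially in the given order, since the monotonicity result (first bullet) will be reused to validate the strength axiom \SA, while the other two axioms follow directly from the conditions on $\Rm$.

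For the monotonicity statement, I would proceed by induction on the formula $A$. The atomic case is immediate from the hereditary condition imposed on valuations $V$, the $\UnitTy$ case is trivial, the conjunction case reduces to the induction hypotheses applied componentwise, and the implication case follows from transitivity of $\Ri$ (given $w \Ri w'$ and $w' \Ri w''$, we have $w \Ri w''$ and so the existing implication at $w$ transports to $w''$). The genuinely interesting case is $\DiaTy{A}$, and this is the step I expect to be the main obstacle. Suppose $\Mod[M], w \Vdash \DiaTy{A}$ with witness $v$ satisfying $w \Rm v$ and $\Mod[M], v \Vdash A$, and suppose $w \Ri w'$. Then the pair $(w', v)$ lies in the composition ${\Ri^{-1} ; \Rm}$, so forward confluence yields some $v'$ with $w' \Rm v'$ and $v \Ri v'$. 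Applying the induction hypothesis at $A$ to this $v \Ri v'$, we obtain $\Mod[M], v' \Vdash A$, which witnesses $\Mod[M], w' \Vdash \DiaTy{A}$. This is precisely the place where forward confluence is essential; without it, monotonicity at $\DiaTy{}$ would fail.

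For the strength axiom \SA, I would unfold the semantic clauses for implication, product, and $\DiaTy{}$. Fixing a world $w$ and some $w' \in W$ with $w \Ri w'$, assume $\Mod[M], w' \Vdash A \ProdTy \DiaTy{B}$, so $\Mod[M], w' \Vdash A$ and there is $v$ with $w' \Rm v$ and $\Mod[M], v \Vdash B$. Using the inclusion condition ${\Rm} \subseteq {\Ri}$, we get $w' \Ri v$, and the already\hyp{}proved monotonicity lemma transports $\Mod[M], v \Vdash A$. Thus $\Mod[M], v \Vdash A \ProdTy B$, so the same $v$ witnesses $\Mod[M], w' \Vdash \DiaTy{(A \ProdTy B)}$.

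For the return axiom \RA, given $w \Ri w'$ with $\Mod[M], w' \Vdash A$, reflexivity of $\Rm$ supplies $w' \Rm w'$, which together with $\Mod[M], w' \Vdash A$ yields $\Mod[M], w' \Vdash \DiaTy{A}$. For the join axiom \JA, unfolding $\DiaTy{\DiaTy{A}}$ at $w'$ produces $v$ with $w' \Rm v$ and then $u$ with $v \Rm u$ and $\Mod[M], u \Vdash A$; transitivity of $\Rm$ gives $w' \Rm u$, whence $\Mod[M], w' \Vdash \DiaTy{A}$. So each of \SA, \RA, \JA corresponds to exactly one of the frame conditions (inclusion, reflexivity of $\Rm$, and transitivity of $\Rm$), with monotonicity invoked only for \SA.
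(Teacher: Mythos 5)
Your proof is correct and follows essentially the same route as the paper: monotonicity by induction on the formula with forward confluence handling the $\DiaTy$ case, and then the axioms \SA, \RA, \JA validated respectively by the inclusion condition (plus monotonicity), reflexivity of $\Rm$, and transitivity of $\Rm$. The paper only sketches these steps, so your write-up is a faithful expansion of the intended argument.
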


\begin{proof}
  The first property, sometimes called ``monotonicity'', states that
  the truth of a formula~$A$ persists as knowledge increases. This
  property can be proved by induction on the formula~$A$, using the
  forward confluence condition for the case of $\DiaTy{A}$. The
  remaining properties can be proved using the definition of the
  relations~$\satisfies$ and $\Vdash$, by respectively using the
  inclusion condition~${\Rm} \subseteq {\Ri}$, reflexivity of $\Rm$,
  and transitivity of $\Rm$.
\end{proof}

\subsection{Syntax and semantics of \MLC}
\label{sec:overview:mlc}

\emph{Syntax}. The calculus \MLC is a typed $\lambda$\hyp{}calculus
that was developed by Moggi~\cite{Moggi91} before \LL.
The language of \MLC consists of types, contexts and terms, and can be
understood as an extension of \STLC with a unary type
constructor~$\DiaTy$ that exhibits the \LL axioms~\SA,~\RA and~\JA.
Types and contexts in \MLC are defined inductively by the following grammars:
\begin{align*}
  \Ty\quad A,B ::= \BaseTy\ |\ \UnitTy\ |\ A \ProdTy B\ |\ A \FunTy B\ |\ \DiaTy{A} \qquad
  & \Ctx\quad \Gamma, \Delta ::= \EmptyCtx\ |\ \ExtCtx{\Gamma}{x : A} \quad (x\ \text{not in}\ \Gamma)
\end{align*}
The type~$\BaseTy$ denotes an uninterpeted base (or ``ground'') type,
$\UnitTy$ denotes the unit type, $A \ProdTy B$ denotes product types,
$A \FunTy B$ denotes function types, and $\DiaTy{A}$ denotes
\emph{modal} types.
A modal type~$\DiaTy{A}$ can be understood as the type of a
computation that performs some side\hyp{}effects to return a value of
type~$A$.
A context (or ``typing environment'')~$\Gamma$ is a list~$x_1 :
A_1,..., x_n : A_n$ of unique type assignments to variables, where
$\EmptyCtx$ denotes the empty context.

The term language of \STLC consists of variables ($x, y,...$) and term
constructs for the unit type ($\unitTm$), product types ($\pairTm$,
$\fstTm$, $\sndTm$) and function types ($\lamLabel$, $\appTm$).
The term language of \MLC extends that of \STLC with the
constructs~$\returnLLTm$ and~$\letLLTm$ for the modal types.
%
%
The \emph{typing judgments}~$\Gamma \vdash t : A$ defined in
\cref{fig:calculus-mlc} identify \emph{well\hyp{}typed} terms in \MLC.
We say that a term~$t$ is \emph{well\hyp{}typed for} type~$A$
\emph{under} context~$\Gamma$ when there exists a derivation, called
the \emph{typing derivation}, of the typing
judgment~$\Gamma \vdash t : A$.
In this article, we are only concerned with well\hyp{}typed terms and
assume that every term~$t$ is well\hyp{}typed for some type~$A$ under
some context~$\Gamma$.
The \emph{equality judgments}~$\Gamma \vdash t \thyeq t' : A$ defined
in \cref{fig:calculus-mlc} identify the equivalence between
well\hyp{}typed terms~$t$ and $t'$ and specify the \emph{equational
  theory} for \MLC.

In \cref{fig:calculus-mlc}, the notation~$\subst{t}{u/x}$ denotes the
\emph{substitution} of term~$u$ for the variable~$x$ in term~$t$, and
the notation~$\wkTm{t}$ denotes the \emph{weakening} of a term~$\Gamma
\vdash t : A$ by embedding it into a larger context~$\Gamma \subseteq
\Gamma'$ as~$\Gamma' \vdash \wkTm{t} : A$.
We write $\Gamma \subseteq \Gamma'$ when the context~$\Gamma$ is a
sub\hyp{}list of context~$\Gamma'$, meaning $\Gamma'$ contains at
least the variable\hyp{}type assignments in $\Gamma$.
Substitution and weakening are both well\hyp{}typed operations on
terms that are admissible in \MLC:

\begin{itemize}
\item Substitution: If $\ExtCtx{\Gamma}{x : A} \vdash t : B$ and $\Gamma \vdash u : A$, then $\Gamma \vdash \subst{t}{u/x} : B$
\item Weakening: If $\Gamma \vdash t : A$ and $\Gamma \subseteq \Gamma'$, then $\Gamma' \vdash \wkTm{t} : A$
\end{itemize}

We say that a type~$X$ is \emph{derivable} (or \emph{can be derived})
in \MLC when there exists a typing derivation of the judgment
$\EmptyCtx \vdash t : X$ for some term~$t$.
The types corresponding to the \LL axioms~$\SA$, $\RA$ and $\JA$ are
each derivable in \MLC, as witnessed by the terms below:

\begin{itemize}
\item $\EmptyCtx \vdash \lamTm{x}{\letInLLTm{y}{(\sndTm{x})}{(\returnLLTm{(\pairTm{(\fstTm{x}),y}}))}} : A \ProdTy \DiaTy{B} \FunTy \DiaTy{(A \ProdTy B)}$
\item $\EmptyCtx \vdash \lamTm{x}{\returnLLTm{x}} : A \FunTy \DiaTy{A} $
\item $\EmptyCtx \vdash \lamTm{x}{\letInLLTm{y}{x}{y}} : \DiaTy{\DiaTy{A}} \FunTy \DiaTy{A}$
\end{itemize}

\begin{figure}[t]
  \begin{mathpar}
    \inferrule[Var\nbhyp{}Zero]{%
    }{%
      \ExtCtx{\Gamma}{x : A} \vdashVar x : A
    }%

    \inferrule[Var\nbhyp{}Succ]{%
      \Gamma \vdashVar x : A \\ (y\ \text{not in}\ \Gamma) 
    }{%
      \ExtCtx{\Gamma}{y : B} \vdashVar x : A
    }%

    \inferrule[Var]{%
      \Gamma \vdashVar x : A
    }{%
      \Gamma \vdash x : A
    }
    
    \inferrule[$\UnitTy$\nbhyp{}Intro]{%
    }{%
      \Gamma \vdash \unitTm : \UnitTy
    }%

    \inferrule[$\ProdTy$\nbhyp{}Intro]{%
      \Gamma \vdash t : A \\
      \Gamma \vdash u : B
    }{%
      \Gamma \vdash \pairTm{t,u} : A \ProdTy B
    }%

    \inferrule[$\ProdTy$\nbhyp{}Elim\nbhyp{}1]{%
      \Gamma \vdash t : A \ProdTy B
    }{%
      \Gamma \vdash \fstTm{t} : A
    }%

    \inferrule[$\ProdTy$\nbhyp{}Elim\nbhyp{}2]{%
      \Gamma \vdash t : A \ProdTy B
    }{%
      \Gamma \vdash \sndTm{t} : B
    }\\

    \inferrule[$\FunTy$\nbhyp{}Intro]{
      \ExtCtx{\Gamma}{x : A} \vdash t : B
    }{%
      \Gamma \vdash \lamTm{x}{t} : A \FunTy B
    }%

    \inferrule[$\FunTy$\nbhyp{}Elim]{%
      \Gamma \vdash t : A \FunTy B\\
      \Gamma \vdash u : A
    }{%
      \Gamma \vdash \appTm{t,u} : B
    }%

    \inferrule[\ML/$\DiaTy$\nbhyp{}Return]{%
      \Gamma  \vdash t : A
    }{%
      \Gamma \vdash \returnLLTm{t} : \DiaTy{A}
    }%

    \inferrule[\ML/$\DiaTy$\nbhyp{}Let]{%
      \Gamma  \vdash t : \DiaTy{A} \\
      \ExtCtx{\Gamma}{x : A} \vdash u : \DiaTy{B}
    }{%
      \Gamma \vdash \letInLLTm{x}{t}{u} : \DiaTy{B}
    }%
    
    \inferrule[$\UnitTy$\nbhyp{}$\eta$]{ %
      \Gamma \vdash t : \UnitTy
    }{%
      \Gamma \vdash t \thyeq \unitTm : \UnitTy
    }%

    \inferrule[$\ProdTy$\nbhyp{}$\eta$]{ %
      \Gamma \vdash t : A \ProdTy B\\
    }{%
    \Gamma \vdash t \thyeq \pairTm{(\fstTm{t}),(\sndTm{t})} : A \ProdTy B
    }%

    \inferrule[$\ProdTy$\nbhyp{}$\beta_1$]{ %
      \Gamma \vdash t : A\\
      \Gamma \vdash u : B\\
    }{%
    \Gamma \vdash \fstTm{(\pairTm{t,u})} \thyeq t : A
    }%

    \inferrule[$\ProdTy$\nbhyp{}$\beta_2$]{ %
      \Gamma \vdash t : A\\
      \Gamma \vdash u : B\\
    }{%
    \Gamma \vdash \sndTm{(\pairTm{t,u})} \thyeq u : B
    }%

    \inferrule[$\FunTy$\nbhyp{}$\eta$]{ %
      \Gamma \vdash t : A \FunTy B\\
    }{%
    \Gamma \vdash t \thyeq \lamTm{x}{(\appTm{(\wkTm{t}),x})} : A \FunTy B
    }%

    \inferrule[$\FunTy$\nbhyp{}$\beta$]{ %
      \ExtCtx{\Gamma}{x : A} \vdash t : B\\
      \Gamma \vdash u : A
    }{%
      \Gamma \vdash \appTm{(\lamTm{x}{t}),u} \thyeq \subst{t}{u/x} : B
    }%

    \inferrule[\ML/$\DiaTy$\nbhyp{}$\beta$]{ %
      \Gamma \vdash t : A \\
      \ExtCtx{\Gamma}{x : A} \vdash u : \DiaTy{B}
    }{%
      \Gamma \vdash \letInLLTm{x}{(\returnLLTm{t})}{u} \thyeq \subst{u}{t/x} : \DiaTy{B}
    }\label{rule:dia-beta/LL}%

    \inferrule[\ML/$\DiaTy$\nbhyp{}$\eta$]{ %
      \Gamma \vdash t : \DiaTy{A}
    }{%
      \Gamma \vdash t \thyeq \letInLLTm{x}{t}{(\returnLLTm{x})} : \DiaTy{A}
    }\label{rule:dia-eta/LL}%

    \inferrule[\ML/$\DiaTy$\nbhyp{}ass]{ %
      \Gamma \vdash t : \DiaTy{A} \\
      \ExtCtx{\Gamma}{x : A} \vdash u : \DiaTy{B} \\
      \ExtCtx{\Gamma}{y : B} \vdash u' : \DiaTy{C}
    }{%
      \Gamma \vdash \letInLLTm{y}{(\letInLLTm{x}{t}{u})}{u'} \thyeq \letInLLTm{x}{t}{(\letInLLTm{y}{u}{(\wkTm{u'})})} : \DiaTy{C}
    }\label{rule:dia-ass/LL}%
    
  \end{mathpar}
  \caption{Well\hyp{}typed terms and equational theory for \MLC}
  \label{fig:calculus-mlc}
\end{figure}

\noindent\emph{Semantics}. The semantics of \MLC is given using categories.
A categorical model of \MLC is a cartesian\hyp{}closed category
equipped with a strong monad~$\DiaFun{}$ (defined in \cref{app:def}).
Given a categorical model~$\Cat[C]$ of \MLC, we interpret types and
contexts in \MLC as $\Cat[C]$\hyp{}objects and terms~$\Gamma \vdash t
: A$ in \MLC as $\Cat[C]$\hyp{}morphisms~$\eval{t} : \eval{\Gamma}
\rightarrow \eval{A}$, by induction on types, contexts and terms
respectively.
The interpretation of the term constructs~$\returnLLTm{}$
and~$\letLLTm{}$ (and in turn the modal axioms~$\SA$, $\RA$ and $\JA$)
is given by the structure of the strong monad~$\DiaFun{}$.
%

\begin{proposition}[Categorical semantics for \MLC]\label{prop:cs:mlc}
  Given two terms~$t,u$ in \MLC, $\Gamma \vdash t \thyeq u : A$ if and
  only if for all categorical models~$\Cat[C]$ of \MLC $\eval{t} =
  \eval{u} : \eval{\Gamma} \rightarrow \eval{A} $ in $\Cat[C]$.
\end{proposition}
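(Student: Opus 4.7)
The plan is to prove soundness (the implication left\hyp{}to\hyp{}right) and completeness (the implication right\hyp{}to\hyp{}left) by two standard but separate arguments.

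For soundness, I would proceed by induction on the derivation of $\Gamma \vdash t \thyeq u : A$, checking that each equality rule of \cref{fig:calculus-mlc} translates into a valid equation between the corresponding $\Cat[C]$\hyp{}morphisms. The rules concerning $\UnitTy$, $\ProdTy$ and $\FunTy$ are discharged using the universal properties of the terminal object, binary products and exponentials in the cartesian\hyp{}closed category $\Cat[C]$. The three rules for $\DiaTy$ follow from the unit and associativity laws of the strong monad $\DiaFun{}$, together with the compatibility axioms between its strength and its unit/multiplication. A preparatory substitution lemma is required: for terms $\ExtCtx{\Gamma}{x : A} \vdash t : B$ and $\Gamma \vdash u : A$, one has $\eval{\subst{t}{u/x}} = \eval{t} \circ \langle \mathrm{id}_{\eval{\Gamma}}, \eval{u} \rangle$, and analogously for weakening; this is proved by induction on $t$ and is invoked in the cases of the $\beta$ rules.

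For completeness, I would construct the \emph{term model} $\Cat[T]$: objects are contexts $\Gamma$, and morphisms $\Gamma \to \Delta$ are tuples, modulo $\thyeq$, of well\hyp{}typed terms assigning a term of the appropriate type to each variable declared in $\Delta$. Identity and composition are given by the identity substitution and substitution composition. The category $\Cat[T]$ is cartesian\hyp{}closed, with products given by context concatenation, terminal object $\EmptyCtx$, and exponentials witnessed by the function type, with evaluation and currying induced by $\appTm$ and $\lamLabel$. The endofunctor $\DiaFun{}$ lifts the $\DiaTy$ type constructor pointwise; its unit is built from the $\returnLLTm$ construct, its multiplication from $\letLLTm$, and its strength from the term witnessing axiom~$\SA$ displayed above \cref{fig:calculus-mlc}. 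In this model one has $\eval{t}_{\Cat[T]} = [t]_{\thyeq}$, so that equality of denotations immediately yields $\Gamma \vdash t \thyeq u : A$.

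The main obstacle is the bookkeeping required to verify that $\Cat[T]$ genuinely has the structure of a cartesian\hyp{}closed category equipped with a strong monad. The monad and strength laws, in particular associativity and the compatibility axioms between the strength and the multiplication, correspond to derived equations in the calculus that are not among the primitive rules of \cref{fig:calculus-mlc}, and each must be obtained by combining the $\eta$, $\beta$ and associativity rules for $\DiaTy$ with the interaction between substitution and $\letLLTm$. Organising these derivations is where the bulk of the work lies, but since the construction is a standard adaptation of the term\hyp{}model argument for \STLC together with the well\hyp{}known correspondence between \MLC and strong monads~\cite{Moggi91}, I anticipate no conceptually new difficulties.
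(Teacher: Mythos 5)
Your proposal matches the paper's proof: the paper establishes the forward direction by induction on the derivation of $\Gamma \vdash t \thyeq u : A$ and the converse by a term model construction (citing Clouston for the latter). Your write-up simply fills in more of the standard details of both halves, so there is nothing further to flag.
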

\begin{proof}
  Follows by induction on the judgment~$\Gamma \vdash t \thyeq u : A$
  in one direction, and by a term model construction (see for e.g.,
  \cite[Section 3.2]{Clouston18}) in the converse.
\end{proof}
\subsection{Sublogics of \LL and corresponding lambda calculi}
\label{sec:overview:sublogics}

The minimal sublogic of \LL, which we call \SL, can be axiomatized by
extending the usual axioms and rules of \IPL with (only) the modal
axiom~\SA.
Furthermore, we axiomatize:
\begin{itemize}
\item the logic \RL by extending \SL with axiom~$\RA$
\item the logic \JL by extending \SL with axiom~$\JA$
\item the logic \LL by extending \SL with axioms~$\RA$ and $\JA$ (as
  defined previously)
\end{itemize}
The semantics for \SL, \RL and \JL is given as before for \LL by
restricting the definitions of frames.
An \SL{}\nbhyp{}frame~$F = \tuple{W,\Ri,\Rm}$ is a triple that
consists of a set~$W$ of worlds, a reflexive\hyp{}transitive
relation~$\Ri$, and a relation~$\Rm$ (that need not be reflexive or
transitive), satisfying the forward confluence and inclusion
conditions.
Furthermore, an \SL{}\nbhyp{}frame is
\begin{itemize}
\item an \RL{}\nbhyp{}frame when $\Rm$ is reflexive
\item an \JL{}\nbhyp{}frame when $\Rm$ is transitive
\item a \LL{}\nbhyp{}frame when $\Rm$ is reflexive and transitive (as defined previously)
\end{itemize}


In the upcoming section, we will define a corresponding modal lambda
calculus for each of \LL's sublogics (\cref{sec:subcalculi}).
We develop proof\hyp{}relevant possible\hyp{}world semantics for these
calculi and show the connection to categorical semantics by studying
the properties of presheaf categories determined by
proof\hyp{}relevant frames (\cref{sec:prks}).
We leverage this connection to then construct Normalization by
Evaluation models for the calculi, and show as corollaries
completeness and inadmissibility theorems (\cref{sec:nci}).

\section{The calculi \SLC, \RLC and \JLC}
\label{sec:subcalculi}

We define the calculi~\SLC, \RLC and \JLC---akin to the calculus~\MLC
in \cref{sec:overview:mlc}---as extensions of \STLC with a unary type
constructor~$\DiaTy$ that exhibits the characteristic axioms of their
corresponding logics.
The types and contexts of all four calculi are identical to \MLC,
while the term constructs and equational theory for the respective
modal fragments vary.
\linebreak\linebreak
\noindent\emph{The calculus~\SLC}.
The well\hyp{}typed terms and equational theory of the modal fragment
of \SLC are defined in \cref{fig:calculus-slc}.
The calculus \SLC extends \STLC with a construct~$\letSLTm$ and two
equations~\SL/$\DiaTy$\nbhyp{}$\eta$ and~\SL/$\DiaTy$\nbhyp{}$\beta$.
Observe that the typing rule for $\letSLTm$ in \SLC differs from
$\letLLTm$ in \MLC: a term~$\letInSLTm{x}{t}{u}$ ``maps'' a
term~$\ExtCtx{\Gamma}{x : A} \vdash u : B$ over a
term~$\Gamma \vdash t : \DiaTy{A}$ to yield a term well\hyp{}typed for
type~$\DiaTy{B}$ under context~$\Gamma$.
This difference disallows a typing derivation for axiom~$\JA$, while
allowing axiom~$\SA$ to be derived as below:
$$\EmptyCtx \vdash \lamTm{x}{\letInSLTm{y}{(\sndTm{x})}{(\pairTm{(\fstTm{x}),y})}} : A \ProdTy \DiaTy{B} \FunTy
\DiaTy{(A \ProdTy B)}$$
Note how this derivation differs from the one in
\cref{sec:overview:mlc} for \MLC: it uses $\letSLTm$ in place of
$\letLLTm$ without a need for $\returnLLTm$.
Axiom~$\RA$, however, cannot be derived in \SLC.

A categorical model of \SLC is a cartesian\hyp{}closed category
equipped with a strong functor~$\DiaTy{}$ (that need not be a monad).
Given a categorical model~$\Cat[C]$ of \SLC, we interpret types and
contexts in \SLC as $\Cat[C]$\hyp{}objects and
terms~$\Gamma \vdash t : A$ as
$\Cat[C]$\hyp{}morphisms~$\eval{t} : \eval{\Gamma} \rightarrow
\eval{A}$ as before with \MLC by induction on types and terms
respectively.
The interpretation of the term construct~$\letSLTm{}$ (and in turn the
modal axiom~$\SA$) is given by the tensorial strength of
functor~$\DiaTy{}$, which gives us a morphism~$X \ProdTy \DiaTy{Y}
\rightarrow \DiaTy{(X \ProdTy Y)}$ for all objects~$X,Y$ in $\Cat[C]$.

\begin{proposition}[Categorical semantics for \SLC]\label{prop:cs:slc}
  Given two terms~$t,u$ in \MLC, $\Gamma \vdash t \thyeq u : A$ if and
  only if for all categorical models~$\Cat[C]$ of \SLC $\eval{t} =
  \eval{u} : \eval{\Gamma} \rightarrow \eval{A} $ in $\Cat[C]$.
\end{proposition}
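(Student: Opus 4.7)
The plan is to mirror the strategy sketched for Proposition~\ref{prop:cs:mlc}: prove soundness by induction on the equality derivation, and completeness by a term model construction. What is distinctive here is that the modal structure on $\Cat[C]$ is only a strong functor, not a strong monad, so every step must be carried out without reference to a unit or a multiplication.

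For the soundness direction (left-to-right), I would induct on the derivation of $\Gamma \vdash t \thyeq u : A$. The congruence, reflexivity, symmetry and transitivity cases are routine, as are the cases for the cartesian-closed fragment ($\UnitTy$-$\eta$, $\ProdTy$-$\beta_{1,2}$, $\ProdTy$-$\eta$, $\FunTy$-$\beta$, $\FunTy$-$\eta$), which reduce to the standard equations of a CCC once one checks that the interpretation is compatible with substitution and weakening. The two genuinely modal cases, \SL/$\DiaTy$-$\beta$ and \SL/$\DiaTy$-$\eta$, reduce to the defining coherence laws of the tensorial strength on $\DiaFun{}$ in $\Cat[C]$.

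For the completeness direction (right-to-left), I would construct a syntactic term model $\Cat[C]_{\SL}$ whose objects are types (equivalently, contexts) of \SLC and whose morphisms $A \to B$ are $\thyeq$-equivalence classes of terms $x : A \vdash t : B$. The CCC structure is given by $\UnitTy$, $\ProdTy$, $\FunTy$ and validated by the $\beta\eta$ rules. The functor $\DiaFun{}$ sends an object $A$ to $\DiaTy{A}$, and a morphism represented by $x : A \vdash t : B$ to the morphism represented by $y : \DiaTy{A} \vdash \letInSLTm{x}{y}{t} : \DiaTy{B}$; its tensorial strength is provided exactly by the term derivation for axiom~\SA{} given in the main text. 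Since the interpretation of a term $\Gamma \vdash t : A$ into $\Cat[C]_{\SL}$ is essentially $t$ itself, equality of interpretations in every model, and in particular in $\Cat[C]_{\SL}$, forces $t \thyeq u$.

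The main obstacle will be verifying, inside $\Cat[C]_{\SL}$, that $\DiaFun{}$ with the chosen strength genuinely satisfies functoriality and all the tensorial-strength coherence axioms modulo $\thyeq$. Because \SLC lacks both $\returnLLTm$ and the monadic form of $\letLLTm$, one cannot replay the familiar monadic-metalanguage calculations of~\cite{Moggi91}; each required equation must be rederived purely from \SL/$\DiaTy$-$\beta$, \SL/$\DiaTy$-$\eta$, and the CCC equations. The work is routine but calls for careful term manipulation, particularly to ensure that the action of $\DiaFun{}$ on morphisms is well-defined on $\thyeq$-classes.
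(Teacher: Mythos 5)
Your proposal matches the paper's approach exactly: the paper proves this proposition (as for Proposition~\ref{prop:cs:mlc}) by induction on the equality judgment for soundness and by a term model construction for completeness, which is precisely your plan. Your elaboration of the modal cases and of the term-model strength via the derivation of axiom~\SA{} is a correct filling-in of the details the paper leaves implicit.
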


\begin{figure}[t]
  \begin{mathpar}
    \inferrule[\SL/$\DiaTy$\nbhyp{}Letmap]{%
      \Gamma  \vdash t : \DiaTy{A} \\
      \ExtCtx{\Gamma}{x : A} \vdash u : B
    }{%
      \Gamma \vdash \letInSLTm{x}{t}{u} : \DiaTy{B}
    }%

    \inferrule[\SL/$\DiaTy$\nbhyp{}$\eta$]{ %
      \Gamma \vdash t : \DiaTy{A} }{%
      \Gamma \vdash t \thyeq \letInSLTm{x}{t}{x} : \DiaTy{A}
    }%

    \inferrule[\SL/$\DiaTy$\nbhyp{}$\beta$]{ %
      \Gamma \vdash t : \DiaTy{A} \\
      \ExtCtx{\Gamma}{x : A} \vdash u : B \\
      \ExtCtx{\Gamma}{y : B} \vdash u' : C
    }{%
      \Gamma \vdash \letInSLTm{y}{(\letInSLTm{x}{t}{u})}{u'} \thyeq \letInSLTm{x}{t}{\subst{(\wkTm{u'})}{u/y}} : \DiaTy{C}
    }%
  \end{mathpar}
  \caption{Well\hyp{}typed terms and equational theory for \SLC (omitting those of \STLC)}
  \label{fig:calculus-slc}
\end{figure}

\noindent\emph{The calculus~\RLC}.
The well\hyp{}typed terms and equational theory for the modal fragment
of \RLC are defined in \cref{fig:calculus-rlc}.
The calculus \RLC extends \STLC with two constructs~$\returnRLTm{}$
and~$\letRLTm$, and three equations~\RL/$\DiaTy$\nbhyp{}$\eta$,
\RL/$\DiaTy$\nbhyp{}$\beta_1$ and \RL/$\DiaTy$\nbhyp{}$\beta_2$.

Observe that the typing rule of the construct~$\letRLTm$ in \RLC is
identical to $\letSLTm$ in \SLC.
As a result, axiom~$\SA$ can be derived in \RLC similar to \SLC:
$$\EmptyCtx \vdash \lamTm{x}{\letInRLTm{y}{(\sndTm{x})}{(\pairTm{(\fstTm{x}),y})}} : A \ProdTy \DiaTy{B} \FunTy
\DiaTy{(A \ProdTy B)}$$
Similarly, observe that the typing rule of the construct~$\returnRLTm$
in \RLC is identical to $\returnLLTm$ in \MLC.
As a result axiom $\RA$ can as well be derived in \RLC similar to
\MLC:
$$\EmptyCtx \vdash \lamTm{x}{\returnRLTm{x}} : A \FunTy \DiaTy{A}$$
Axiom~$\JA$ cannot be derived in \RLC since there is no counterpart
for $\letLLTm$ in \RLC.

A categorical model of \RLC is a cartesian\hyp{}closed category
equipped with a strong pointed functor~$\DiaTy{}$.
The term construct~$\letRLTm{}$ (and in turn axiom~$\SA$) is
interpreted in a model~$\Cat[C]$ of \RLC using the tensorial strength
of functor~$\DiaTy{}$, as before with \SLC.
The interpretation of the term construct~$\returnRLTm{}$ (and in turn
axiom~$\RA$) is given by the pointed structure of the
functor~$\DiaTy{}$, which gives us a morphism~$X \rightarrow
\DiaTy{X}$ for all objects~$X$ in $\Mod[C]$.

\begin{proposition}[Categorical semantics for \RLC]\label{prop:cs:rlc}
  Given two terms~$t,u$ in \RLC, $\Gamma \vdash t \thyeq u : A$ if and
  only if for all categorical models~$\Cat[C]$ of \RLC $\eval{t} =
  \eval{u} : \eval{\Gamma} \rightarrow \eval{A} $ in $\Cat[C]$.
\end{proposition}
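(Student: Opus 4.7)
The plan is to follow the two-directional pattern used for Proposition~\ref{prop:cs:mlc}, adapting it to the reduced structure that strong pointed functors afford compared to strong monads.

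For the \emph{soundness} direction ($\Gamma \vdash t \thyeq u : A$ implies $\eval{t} = \eval{u}$ in every $\Cat[C]$), I would proceed by induction on the derivation of the equality judgment. The cases for the \STLC fragment ($\UnitTy$-$\eta$, $\ProdTy$-$\eta$, $\ProdTy$-$\beta_i$, $\FunTy$-$\eta$, $\FunTy$-$\beta$) are dispatched using the cartesian-closed structure, exactly as in the standard interpretation. The novel work is in the three modal cases. For \RL/$\DiaTy$-$\beta_1$, which validates the interaction of $\returnRLTm$ with $\letRLTm$, I would unfold the interpretation of $\letInRLTm{x}{(\returnRLTm{t})}{u}$ as the composite $\langle \mathrm{id}, \eval{t} \rangle ; \mathrm{point}_A ; \mathrm{str}_{\Gamma,A} ; \DiaTy{\eval{u}}$ (or whatever the precise composite is in the paper's conventions) and use the coherence law of the point with the strength to match it to $\eval{\subst{u}{t/x}}$ given by the substitution lemma of the interpretation. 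For \RL/$\DiaTy$-$\beta_2$, which should be $\SL$-style associativity of $\letRLTm$, I would invoke the naturality of the strength together with naturality of $\DiaTy{}$ on morphisms. For \RL/$\DiaTy$-$\eta$, I would use the same strength coherence applied to identity maps. Congruence rules are then routine.

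For the \emph{completeness} direction, I would construct the syntactic \emph{term model} $\Cat[C]_{\RLC}$: objects are \RLC types, and a morphism $A \to B$ is an equivalence class $[t]_{\thyeq}$ of terms $x : A \vdash t : B$. Composition and identities come from substitution and the variable, well-defined modulo $\thyeq$ by the congruence rules and the $\FunTy$-$\beta$ rule. The cartesian-closed structure is inherited from \STLC. The crucial step is to equip this category with a strong pointed functor: on objects, $\DiaTy{}$ acts by the type constructor; on morphisms, $\DiaTy{[t]}$ is the class of $\letInRLTm{x}{y}{t}$ (with $y$ the fresh input), the point $A \to \DiaTy{A}$ is the class of $\returnRLTm{x}$, and the tensorial strength is the class of $\letInRLTm{x}{\sndTm{y}}{\pairTm{\fstTm{\wkTm{y}},x}}$. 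Functoriality, naturality of the point, and the coherence laws of strength with associator/unitor and with the point are all direct consequences of the very $\DiaTy$-equations we axiomatized in \cref{fig:calculus-rlc} (\SL/$\DiaTy$-$\eta$, \SL/$\DiaTy$-$\beta$, \RL/$\DiaTy$-$\beta_i$, \RL/$\DiaTy$-$\eta$). Once $\Cat[C]_{\RLC}$ is verified to be a model, the interpretation satisfies $\eval{t} = [t]$ (up to the canonical iso $\eval{\Gamma} \cong A_1 \ProdTy \cdots \ProdTy A_n$), so if $\eval{t} = \eval{u}$ holds in $\Cat[C]_{\RLC}$ then $[t] = [u]$ and hence $\Gamma \vdash t \thyeq u : A$.

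The main obstacle I anticipate is bookkeeping in the completeness direction: checking that the chosen representatives for the point and the tensorial strength of $\DiaTy{}$ in the term model actually satisfy the \emph{strong pointed functor} laws using only the four modal equations available in \RLC, without silently invoking anything that would require a join (and thus collapse \RLC into \MLC). In particular I would double-check the coherence law relating the point with the strength, since it is derivable in a strong monad in more than one way but in \RLC must factor through $\returnRLTm$ and $\letRLTm$ in precisely the manner dictated by \RL/$\DiaTy$-$\beta_1$. As with Proposition~\ref{prop:cs:mlc}, once these laws are verified, the remainder is a standard term-model argument and one may point the reader to an analogue in the literature (e.g., \cite[Section 3.2]{Clouston18}) for the purely categorical details.
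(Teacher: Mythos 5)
Your proposal is correct and takes essentially the same route as the paper, which proves the analogous statement for \MLC by induction on the equality judgment in one direction and a term-model construction in the converse, and leaves the \RLC instance to the same (unstated) argument; your version simply supplies more of the detail. One cosmetic slip worth fixing: you have swapped the labels of the two $\beta$ rules --- in \cref{fig:calculus-rlc} it is \RL/$\DiaTy$-$\beta_2$ that governs $\letRLTm$ applied to $\returnRLTm$ (and hence is discharged by the point--strength coherence) and \RL/$\DiaTy$-$\beta_1$ that is the letmap-associativity law (and there are three modal equations, not four) --- but the semantic justification you attach to each equation as you actually describe it is the right one.
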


\begin{figure}[t]
  \begin{mathpar}
    \inferrule[\RL/$\DiaTy$\nbhyp{}Return]{%
      \Gamma  \vdash t : A
    }{%
      \Gamma \vdash \returnRLTm{t} : \DiaTy{A}
    }\label{rule:returnRLTm/RL}

    \inferrule[\RL/$\DiaTy$\nbhyp{}Letmap]{%
      \Gamma  \vdash t : \DiaTy{A} \\ \ExtCtx{\Gamma}{x : A} \vdash u : B
    }{%
      \Gamma \vdash \letInRLTm{x}{t}{u} : \DiaTy{B}
    }\label{rule:letTm/RL}

    \inferrule[\RL/$\DiaTy$\nbhyp{}$\eta$]{ %
      \Gamma \vdash t : \DiaTy{A}
    }{%
      \Gamma \vdash t \thyeq \letInRLTm{x}{t}{x} : \DiaTy{A}
    }\label{rule:dia-eta/RL}%

    \inferrule[\RL/$\DiaTy$\nbhyp{}$\beta_1$]{ %
      \Gamma \vdash t : \DiaTy{A} \\
      \ExtCtx{\Gamma}{x : A} \vdash u : B \\
      \ExtCtx{\Gamma}{y : B} \vdash u' : C
    }{%
      \Gamma \vdash \letInRLTm{y}{(\letInRLTm{x}{t}{u})}{u'} \thyeq \letInRLTm{x}{t}{\subst{(\wkTm{u'})}{u/y}} : \DiaTy{C}
    }\label{rule:dia-beta1/RL}%

    \inferrule[\RL/$\DiaTy$\nbhyp{}$\beta_2$]{ %
      \Gamma \vdash t : A\\
      \ExtCtx{\Gamma}{x : A} \vdash u : B
    }{%
      \Gamma \vdash \letInRLTm{x}{(\returnRLTm{t})}{u} \thyeq \returnRLTm{(\subst{u}{t/x})} : \DiaTy{B}
    }\label{rule:dia-beta2/RL}

  \end{mathpar}
  \caption{Well\hyp{}typed terms and equational theory for \RLC (omitting those of \STLC)}
  \label{fig:calculus-rlc}
\end{figure}

\noindent\emph{The calculus~\JLC}.
The well\hyp{}typed terms and equational theory for the modal fragment
of \JLC are defined in \cref{fig:calculus-jlc}.
The calculus \JLC extends \STLC with two constructs~$\letMapJLTm{}$
and~$\letJoinJLTm$, and five equations~\JL/$\DiaTy$\nbhyp{}$\eta$,
\JL/$\DiaTy$\nbhyp{}$\beta_1$, \JL/$\DiaTy$\nbhyp{}$\beta_2$,
\JL/$\DiaTy$\nbhyp{}com and \JL/$\DiaTy$\nbhyp{}ass.

Observe that the typing rule of the construct $\letMapJLTm$ in \JLC is
once again identical to $\letSLTm$ in \SLC.
As a result, axiom~$\SA$ can once again be derived in \JLC similar to
\SLC:
$$\EmptyCtx \vdash \lamTm{x}{\letMapInJLTm{y}{(\sndTm{x})}{(\pairTm{(\fstTm{x}),y})}} : A \ProdTy \DiaTy{B} \FunTy \DiaTy{(A \ProdTy B)}$$
Similarly, observe that the typing rule of the
construct~$\letJoinJLTm$ in \JLC is identical to construct~$\letLLTm$
in \MLC.
As a result, axiom $\JA$ can be derived in \JLC similar to
\MLC:
$$\EmptyCtx \vdash \lamTm{x}{\letJoinInJLTm{y}{x}{y}} : \DiaTy{\DiaTy{A}} \FunTy \DiaTy{A}$$
Axiom~$\RA$ cannot be derived in \JLC as there is no counterpart for
$\returnLLTm$ in \JLC.

A categorical model of \JLC is a cartesian\hyp{}closed category
equipped with a strong semimonad~$\DiaTy{}$.
We interpret the term construct~$\letMapJLTm{}$ (and in turn
axiom~$\SA$) in a categorical model~$\Cat[C]$ of \JLC, using the
tensorial strength of functor~$\DiaTy{}$ as before with \SLC and \RLC.
The interpretation of the term construct~$\letJoinJLTm{}$ (and in turn
axiom~$\JA$) is given by the semimonad structure of
functor~$\DiaTy{}$, which gives us a morphism~$\DiaTy{\DiaTy{X}}
\rightarrow \DiaTy{X}$ for all objects~$X$ in $\Mod[C]$.

\begin{proposition}[Categorical semantics for \JLC]\label{prop:cs:jlc}
  Given two terms~$t,u$ in \JLC, $\Gamma \vdash t \thyeq u : A$ if and
  only if for all categorical models~$\Cat[C]$ of \JLC $\eval{t} =
  \eval{u} : \eval{\Gamma} \rightarrow \eval{A} $ in $\Cat[C]$.
\end{proposition}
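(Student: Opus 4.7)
The plan is to mirror the strategy used for the analogous \cref{prop:cs:mlc,prop:cs:slc,prop:cs:rlc}, namely: soundness (left\nobreakdash-to\nobreakdash-right) by rule induction, and completeness (right\nobreakdash-to\nobreakdash-left) by a term model construction.

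For soundness, I would induct on the derivation of $\Gamma \vdash t \thyeq u : A$. Reflexivity, symmetry, transitivity, and congruence with respect to each term constructor are routine, as they follow from the corresponding properties of morphism equality together with the compositional definition of $\eval{\cdot}$. The $\beta$ and $\eta$ equations for $\UnitTy$, $\ProdTy$, and $\FunTy$ hold in any cartesian\hyp{}closed category. The interesting cases are the five modal equations of \cref{fig:calculus-jlc}: the $\eta$ and first $\beta$ rule reduce to the functoriality laws of the action of $\DiaTy{}$ given through its tensorial strength, while the second $\beta$ rule becomes the unit\hyp{}compatibility of the strength with the semimonad multiplication. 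The commutativity rule corresponds exactly to the strength/multiplication coherence diagram, and the associativity rule corresponds to the semimonad multiplication law. Each of these is part of the definition of a strong semimonad.

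For completeness, I would perform a term model construction, along the lines of \cite[Section 3.2]{Clouston18}. The base category has \JLC types as objects and equivalence classes of terms $x : A \vdash t : B$ modulo $\thyeq$ as morphisms $A \to B$, with composition by substitution and identities given by variables; products and exponentials are interpreted syntactically by $\ProdTy$ and $\FunTy$. I would then equip this category with a strong semimonad: $\DiaTy{}$ acts on objects as the syntactic $\DiaTy{}$, its action on morphisms and its tensorial strength are both given by $\letMapJLTm$ (applied to a variable and to a pair respectively), and its multiplication is given by $\letJoinJLTm$. Once it is verified that this data satisfies the strong\hyp{}semimonad axioms, the assumed hypothesis $\eval{t} = \eval{u}$ instantiated at this model yields $\Gamma \vdash t \thyeq u : A$ immediately, because the interpretation of a term in the term model is (an equivalence class containing) the term itself.

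The main obstacle will lie on the completeness side, in verifying that the syntactic candidate really is a strong semimonad: one must pick canonical representatives, carefully apply weakening and substitution, and check that the associativity and coherence diagrams reduce precisely to the equations \JL/$\DiaTy$\nbhyp{}ass and \JL/$\DiaTy$\nbhyp{}com (and their derived forms). The soundness side requires the dual bookkeeping on the semantic side, but once the interpretations of $\letMapJLTm$ and $\letJoinJLTm$ are fixed in terms of strength and multiplication, it is a direct unfolding of the strong\hyp{}semimonad laws.
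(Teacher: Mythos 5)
Your proposal is correct and matches the paper's intended argument exactly: the paper proves the analogous \cref{prop:cs:mlc} by induction on the equality judgment for soundness and a term model construction à la Clouston for completeness, and the same recipe is implicitly intended for \JLC. One small terminological slip: what you call ``unit\hyp{}compatibility'' for \JL/$\DiaTy$\nbhyp{}$\beta_2$ is really the naturality of the multiplication~$\join$ (a semimonad has no unit), but this does not affect the validity of the argument.
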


\begin{figure}[t]
  \begin{mathpar}
    \inferrule[\JL/$\DiaTy$\nbhyp{}Letmap]{%
      \Gamma  \vdash t : \DiaTy{A} \\ \ExtCtx{\Gamma}{x : A} \vdash u : B
    }{%
      \Gamma \vdash \letMapInJLTm{x}{t}{u} : \DiaTy{B}
    }\label{rule:letMapTm/JL}

   \inferrule[\JL/$\DiaTy$\nbhyp{}Let]{%
      \Gamma  \vdash t : \DiaTy{A} \\ \ExtCtx{\Gamma}{x : A} \vdash u : \DiaTy{B}
    }{%
      \Gamma \vdash \letJoinInJLTm{x}{t}{u} : \DiaTy{B}
    }\label{rule:letJoinTm/JL}

    \inferrule[\JL/$\DiaTy$\nbhyp{}$\eta$]{ %
      \Gamma \vdash t : \DiaTy{A}
    }{%
      \Gamma \vdash t \thyeq \letMapInJLTm{x}{t}{x} : \DiaTy{A}
    }\label{rule:dia-eta/JL}%

    \inferrule[\JL/$\DiaTy$\nbhyp{}$\beta_1$]{ %
      \Gamma \vdash t : \DiaTy{A} \\
      \ExtCtx{\Gamma}{x : A} \vdash u : B \\
      \ExtCtx{\Gamma}{y : B} \vdash u' : C
    }{%
      \Gamma \vdash \letMapInJLTm{y}{(\letMapInJLTm{x}{t}{u})}{u'} \thyeq \letMapInJLTm{x}{t}{\subst{(\wkTm{u'})}{u/y}} : \DiaTy{C}
    }\label{rule:dia-beta1/JL}%

    \inferrule[\JL/$\DiaTy$\nbhyp{}$\beta_2$]{ %
      \Gamma \vdash t : \DiaTy{A} \\
      \ExtCtx{\Gamma}{x : A} \vdash u : B \\
      \ExtCtx{\Gamma}{y : B} \vdash u' : \DiaTy{C}
    }{%
      \Gamma \vdash \letJoinInJLTm{y}{(\letMapInJLTm{x}{t}{u})}{u'} \thyeq \letJoinInJLTm{x}{t}{\subst{(\wkTm{u'})}{u/y}} : \DiaTy{C}
    }\label{rule:dia-beta2/JL}%

    \inferrule[\JL/$\DiaTy$\nbhyp{}com]{ %
      \Gamma \vdash t : \DiaTy{A} \\
      \ExtCtx{\Gamma}{x : A} \vdash u : \DiaTy{B} \\
      \ExtCtx{\Gamma}{y : B} \vdash u' : C
    }{%
      \Gamma \vdash \letMapInJLTm{y}{(\letJoinInJLTm{x}{t}{u})}{u'} \thyeq \letJoinInJLTm{x}{t}{(\letMapInJLTm{y}{u}{(\wkTm{u'})})} : \DiaTy{C}
    }\label{rule:dia-com/JL}%

   \inferrule[\JL/$\DiaTy$\nbhyp{}ass]{ %
      \Gamma \vdash t : \DiaTy{A} \\
      \ExtCtx{\Gamma}{x : A} \vdash u : \DiaTy{B} \\
      \ExtCtx{\Gamma}{y : B} \vdash u' : \DiaTy{C}
    }{%
      \Gamma \vdash \letJoinInJLTm{y}{(\letJoinInJLTm{x}{t}{u})}{u'} \thyeq \letJoinInJLTm{x}{t}{(\letJoinInJLTm{y}{u}{(\wkTm{u'})})} : \DiaTy{C}
    }\label{rule:dia-ass/JL}\\

  \end{mathpar}

  \caption{Well\hyp{}typed terms and equational theory for \JLC (omitting those of \STLC)}
  \label{fig:calculus-jlc}
\end{figure}

\section{Proof-relevant possible-world semantics}
\label{sec:prks}

In \cref{sec:overview}, possible\hyp{}world semantics was given for
the logic~\LL and its sublogics in a classical meta\hyp{}language
using sets and relations.
In this section, we will give proof\hyp{}relevant possible\hyp{}world
semantics for lax modal lambda calculi, for which we will instead work
in a constructive dependent type\hyp{}theory loosely based on the
proof assistant Agda.

We will use a type~$X : \Type$ in place of a set~$X$ and values~$x: X$
in place of elements~$x \in X$.
The arrow~$\to$ denotes functions, and quantifications~$\forall x$ and
$\Sigma_x$ denote universal and existential quantification
respectively, where $x : X$ is a value of some type~$X : \Type$ that
is left implicit.
A value of type~$\forall x.\, P(x)$ for some predicate~$ P : X \to
\Type$ is a function $\lambda x.\, p$ with $p : P(x)$.
When the expression~$p$ does not mention the variable~$x$ we will
leave the abstraction implicit and simply write~$p$ as a value of
$\forall x.\, P(x)$.
A value of type~$\Sigma_x.\, P(x)$ is a tuple~$\tuple{x,p}$, but we
will similarly leave the witness~$x$ implicit at times and write
$\tuple{\_,p}$ or simply $p$ for brevity.
\linebreak\linebreak
\noindent\emph{Semantics for \SLC}.
A proof\hyp{}relevant \SLC{}\hyp{}frame~$F = (W,\Ri,\Rm)$ is a triple
that consists of a \emph{type}~$W : \Type$ of worlds and two
proof\hyp{}relevant relations~$\Ri, \Rm : W \to W \to \Type$ with
\begin{itemize}
\item functions~$\reflRi : \forall w.\, w \Ri w$ and $\transRi :
  \forall w,w',w''.\, w \Ri w' \to w' \Ri w'' \to w \Ri w''$
  respectively proving the reflexivity and transitivity of $\Ri$ such
  that
  \begin{itemize}
    \item $\transRi{\reflRi,i}=i$ and $\transRi{i,\reflRi}=i$
    \item $\transRi{(\transRi{i,i'}),i''} = \transRi{i,(\transRi{i',i''}})$
    \end{itemize}
\item
  function~$\factor : \forall w,w',v.\, w \Ri w' \to w \Rm v \to
  \Sigma_{v'}.\, (w' \Rm v' \times v \Ri v')$ such that
  \begin{itemize}
    \item $\factor{\reflRi,m} = (m,\reflRi)$
    \item $\factor{(\transRi{i_1,i_2}),m} = (m_2',(\transRi{i_1',i_2'}))$
      \\ where $(m_1',i_1') = \factor{i_1,m}$ and $(m_2',i_2') = \factor{i_2,m_1'}$.
  \end{itemize}
\item function~$\incl : \forall w, v.\, w \Rm v \to w \Ri v$ such
  that
  \begin{itemize}\item $\transRi{i,(\incl{m'})} = \transRi{(\incl{m}),i'}$, where $(i',m') = \factor{i,m}$ \end{itemize}
\end{itemize}

The function~$\reflRi$ and $\transRi$ are the proof\hyp{}relevant
encoding of reflexivity and transitivity of $\Ri$ respectively.
These functions are subject to the accompanying coherence laws, which
state that the proof computed by $\reflRi$ must be the unit of
$\transRi$, i.e. $\Ri$ must form a category~$\WCat$.
The coherence laws facilitate a sound interpretation of \SLC's
equational theory.

The functions~$\factor$ and~$\incl$ are proof\hyp{}relevant encodings
of the forward confluence (${\Ri^{-1} ; \Rm} \subseteq {\Rm ;
  \Ri^{-1}}$) and inclusion (${\Rm} \subseteq {\Ri}$) conditions
respectively.
Given a proof of $w \Ri w' $ (i.e. $w' \Ri^{-1} w$) and $w \Rm v$,
$\factor$ returns a pair of proofs for some world~$v'$: $w' \Rm v'$
and $v \Ri v'$ (i.e. $v' \Ri^{-1} v$).
Similarly, given a proof of $w \Rm v $, $\incl$ returns a proof of $w
\Ri v $.
These functions are also accompanied by the stated coherence laws.

The proof\hyp{}relevant relation~$\Ri$ in a \SLC{}\hyp{}frame
determines a category~$\WiCat$ whose objects are given by worlds and
morphisms by proofs of $\Ri$, with $\reflRi$ witnessing the identity
morphisms and $\transRi$ witnessing the composition of morphisms.
This determines a category~$\Psh[\WCat]$ of covariant presheaves
indexed by~$\WCat$.
The objects in the category~$\Psh[\WCat]$ are presheaves and the
morphisms are natural transformations.
A presheaf~$P$ is given by a family of meta\hyp{}language types~$P_w :
\Type$ indexed by worlds~$w : W$, accompanied by ``transportation''
functions~$\wkPsh{P}{}{} : \forall w,w'.\, w \Ri w' \to P_w \to
P_{w'}$ subject to the ``functoriality'' conditions that
$\wkPsh{P}{\reflRi}{p} = p$ and $\wkPsh{P}{(\transRi{i,i'})}{p} =
\wkPsh{P}{i'}{(\wkPsh{P}{i}{p})}$, for arbitrary values~$i : w \Ri
w'$, $i' : w' \Ri w''$ and $p : P_w$.
A natural transformation~$f : P \natto Q$ is a family of functions
$\forall w.\, P_w \to Q_w$ subject to a ``naturality'' condition that
$f\,(\wkPsh{P}{i}{p}) = \wkPsh{Q}{i}{(f\, p)}$.
\begin{proposition}[$\DiaFun$ Strong Functor]\label{prop:psh-dia-strong-fun}
  The presheaf category~$\Psh[\WCat]$ determined by a
  \SLC{}\hyp{}frame exhibits a strong endofunctor~$(\DiaFun{P})_w =
  \Sigma_v.\, w \Rm v \times P_{v}$ for some world~$w$ and
  presheaf~$P$.
\end{proposition}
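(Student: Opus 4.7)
The plan is to construct the presheaf action of $\DiaFun$, its action on morphisms, and its tensorial strength, then verify that each coherence required of $\DiaFun$ as a strong endofunctor on $\Psh[\WCat]$ is discharged by a corresponding coherence law of the \SLC{}\hyp{}frame.

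I would first equip $\DiaFun{P}$ with its presheaf transport: given $i : w \Ri w'$ and $(v, m, p) \in (\DiaFun{P})_w$, define $\wkPsh{\DiaFun{P}}{i}{(v,m,p)} = (v', m', \wkPsh{P}{i'}{p})$, where $(v', m', i') = \factor(i,m)$. Functoriality of $\wkPsh{\DiaFun{P}}{}{}$ reduces to the functoriality of $P$ combined with the two coherence laws stipulated for $\factor$: the law on $\reflRi$ gives preservation of identities, and the law on $\transRi$ gives preservation of composition. On morphisms, I would lift a natural transformation $f : P \to Q$ by $(\DiaFun{f})_w(v,m,p) = (v, m, f_v(p))$; naturality of $\DiaFun{f}$ collapses to naturality of $f$, and preservation of identities and composition is pointwise and immediate.

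For the tensorial strength, I would take $\mathrm{st}_{P,Q} : P \times \DiaFun{Q} \to \DiaFun{(P \times Q)}$ to be the family
\[
(\mathrm{st}_{P,Q})_w (p, (v, m, q)) \;=\; (v,\, m,\, (\wkPsh{P}{\incl{m}}{p},\, q)),
\]
using $\incl$ to transport $p \in P_w$ to $P_v$ so that it can be paired with $q \in Q_v$.

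The main obstacle, and the place where the \SLC{}\hyp{}frame axioms really do the work, is verifying naturality of $\mathrm{st}$ in $w$. Given $i : w \Ri w'$ and writing $(v', m', i') = \factor(i, m)$, expanding both legs of the naturality square at $(p,(v,m,q))$ reduces the required equality—after matching the $Q$-component, which agrees on both sides—to
\[
\wkPsh{P}{\incl{m'}}{(\wkPsh{P}{i}{p})} \;=\; \wkPsh{P}{i'}{(\wkPsh{P}{\incl{m}}{p})}.
\]
By functoriality of $P$, both sides collapse to a single weakening of $p$, and the equality reduces to $\transRi{i,(\incl{m'})} = \transRi{(\incl{m}),i'}$, which is precisely the coherence law stated for $\incl$ in the definition of an \SLC{}\hyp{}frame. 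Having established naturality, the remaining strength coherences—compatibility with the left unitor of the cartesian product and with the associator—unfold to routine equalities in $\Psh[\WCat]$ using only the presheaf structure of $P$ and $Q$, and require no further non-trivial frame laws.
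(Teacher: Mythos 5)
Your proposal is correct and follows essentially the same route as the paper: transport on $\DiaFun{P}$ via $\factor$, the pointwise action on natural transformations, and strength via $\incl$, with each coherence discharged by the corresponding frame law. The paper merely asserts that the strength ``can be defined using $\incl$ and its accompanying coherence conditions,'' whereas you usefully make explicit that the naturality square reduces to $\transRi{i,(\incl{m'})} = \transRi{(\incl{m}),i'}$.
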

\begin{proof}
  We show that the family~$\DiaFun{P}$ is a presheaf by defining the
  function~$\wkPsh{\DiaFun{P}}{}{}$ as:
  $$\wkPsh{\DiaFun{P}}{i}{\tuple{v,m,p}} = \tuple{v',m',\wkPsh{P}{i'}{p}}
  \qquad \text{where}\ \tuple{v',m' : w' \Rm v',i' : v \Ri v'} =
  \factor{i,m}$$
  The functoriality conditions follow from the coherence conditions on
  $\factor$.
  To show the operator~$\DiaFun$ is a functor on presheaves, we must
  show that for every natural transformation~$f : P \natto Q$ there
  exists a natural transformation~$\DiaFun{f} : \DiaFun{P} \natto
  \DiaFun{Q}$.
  This natural transformation is defined as $\DiaFun{f}\,\tuple{v,m,p}
  = \tuple{v,m,(f_v\,p)}$ by applying $f$ at the world~$v$ witnessing
  the $\Sigma$ quantification.
  The laws of the functor~$\DiaFun$ follow immediately.
  The strength of the functor~$\DiaFun$ can be defined using the
  function~$\incl$ and its accompanying coherence conditions.
\end{proof}

\cref{prop:cs:slc,prop:psh-dia-strong-fun} give us that~$\Psh[\WCat]$
is a categorical model of \SLC.
For clarity, we elaborate on this consequence by giving a direct
interpretation of \SLC in $\Psh[\WCat]$.

A proof\hyp{}relevant possible\hyp{}world \emph{model}~$M = (F,V)$
couples a proof\hyp{}relevant frame~$F$ with a valuation function~$V$
that assigns to a base type~$\BaseTy$ a presheaf~$V_\BaseTy : W \to
\Type$.
Given such a model, the types in \SLC are interpreted as presheaves,
i.e. we interpret a type~$A$ as a family~$\eval{A}_w : \Type$ indexed
by an arbitrary world~$w : W$---as shown on the left below.
\vspace{-\abovedisplayskip}
\begin{center}
\begin{minipage}{.6\linewidth}
\centering
\begin{equation*}
  \begin{array}{>{\evallbracket}l@{\evalrbracket}l @{\;}c@{\;} l}
    \BaseTy    & _w & = & V_{\BaseTy,w} \\
    \UnitTy    & _w & = & \top \\
    A \ProdTy B & _w & = & \eval{A}_{w} \times \eval{B}_{w}\\
    A \FunTy B & _w & = & \forall w'.\, w \Ri w' \to \eval{A}_{w'} \to \eval{B}_{w'}\\
    \DiaTy{A}  & _w & = & \textstyle\sum\nolimits_{v}.\, w \Rm v  \times \eval{A}_{v} \\
  \end{array}
\end{equation*}
\end{minipage}%
\begin{minipage}{.4\linewidth}
\centering
\begin{equation*}
  \begin{array}{>{\evallbracket}l@{\evalrbracket}l @{\;}c@{\;} l}
   \EmptyCtx    & _w & = & \top \\
   \ExtCtx{\Gamma}{x : A} & _w & = & \eval{\Gamma}_{w} \times \eval{A}_{w}\\    
  \end{array}
\end{equation*}
\end{minipage}
\end{center}
The interpretation of the base type~$\BaseTy$ is given by the
valuation function~$V$, and the unit, product and function types are
interpreted as usual using their semantic counterparts.
We interpret the $\DiaTy{}$ modality using the
proof\hyp{}relevant quantifier~$\textstyle\sum$: the interpretation of
a type~$\DiaTy{A}$ at a world~$w$ is given by the interpretation
of~$A$ at some modal future world~$v$ along with a proof of $w \Rm v$
witnessing the connection from~$w$ to $v$ via~$\Rm$.
The typing contexts are interpreted as usual by taking the cartesian
product of presheaves.

The terms in \SLC are interpreted as natural transformations by
induction on the typing judgment.
Interpretation of \STLC terms follows the usual routine:
we interpret variables by projecting the environment~$\gamma :
\eval{\Gamma}_w$ using a function~$\lookup$, the unit and pair
constructs~($\unitTm$, $\pairTm$, $\fstTm$, $\sndTm$) with their
semantic counterparts~($()$, $\pair{-}{-}$, $\pi_1$, $\pi_2$), and the
function constructs~($\lamLabel$, $\appTm$) with appropriate semantic
function abstraction and application.
\begin{align*}
  \begin{array}{>{\evallbracket\ }l@{\ \evalrbracket} @{\$} l @{\;}c@{\;} l}
    \multicolumn{4}{l}{\eval : \Gamma \vdash A \to (\forall w.\, \eval{\Gamma}_{w} \to  \eval{A}_{w})} \\
    x               & \gamma & = & \lookup{x,\gamma} \\
    \unitTm         & \gamma & = & () \\
    \pairTm{t,u}    & \gamma & = & \pair{\eval{t}\$ \gamma}{\eval{u}\$ \gamma} \\
    \fstTm{t}    & \gamma & = & \pi_1{(\eval{t}\$ \gamma)} \\
    \sndTm{t}    & \gamma & = & \pi_2{(\eval{t}\$ \gamma)} \\
    \lamTm{x}{t}       & \gamma & = & \lambda i.\, \lambda a.\, \eval{t}\$\pair{\wkPsh{\eval{\Gamma}}{i}{\gamma}}{a} \\
    \appTm{t,u}     & \gamma & = & (\eval{t}\$ \gamma)\$ \reflRi\$ (\eval{u}\$ \gamma) \\
    \letInSLTm{x}{t}{u} & \gamma & = & \pair{m}{\eval{u}\$ \pair{\wkPsh{\eval{\Gamma}}{(\incl{m})}{\gamma}}{a}} \\
    \multicolumn{4}{l}{\quad\quad \text{where } \pair{m : w \Rm v}{a : \eval{A}_v} = \eval{t}\$ \gamma}
  \end{array}
\end{align*}
The interesting case is that of $\letSLTm$: given terms~$\Gamma \vdash
t : \DiaTy{A}$ and $\ExtCtx{\Gamma}{x : A} \vdash u : B$, and an
environment~$\gamma : \eval{\Gamma}_w$, we must produce an element of
type~$\eval{\DiaTy{B}}_w = \textstyle\sum_v.\, w \Rm v \times \eval{B}_{v}$.
Recursively interpreting~$t$ gives us a pair~$\pair{m : w \Rm v}{a :
  \eval{A}_v}$, using the former of which we transport~$\gamma$ along
$\Rm$ to the world~$v$, as $\wkPsh{\eval{\Gamma}}{(\incl{m})}{\gamma}
: \eval{\Gamma}_v$, which is in turn used to recursively
interpret~$u$, thus obtaining the desired element of
type~$\eval{B}_v$.
\linebreak\linebreak
\noindent\emph{Semantics for \RLC}.
A proof\hyp{}relevant \RLC{}\hyp{}frame~$(W,\Ri,\Rm)$ is a
\SLC{}\hyp{}frame that exhibits:
\begin{itemize}
  \item function~$\reflRm : \forall w.\, w \Rm w$, such that
    \begin{itemize}
    \item $\factor{i,\reflRm} = \tuple{\reflRm,i}$
    \item $\incl{\reflRm}=\reflRi$
    \end{itemize}
  \end{itemize}
\begin{proposition}[$\DiaFun$ Strong Pointed]\label{prop:psh-dia-strong-pointed}
   The strong functor~$\DiaFun$ on the category of
   presheaves~$\Psh[\WCat]$ determined by a \RLC{}\hyp{}frame is
   strong pointed.
\end{proposition}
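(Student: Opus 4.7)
The plan is to equip the strong functor~$\DiaFun$ from~\cref{prop:psh-dia-strong-fun} with a natural point~$\eta : \mathrm{Id} \natto \DiaFun$ that is compatible with the tensorial strength, using the new datum~$\reflRm$ together with its two coherence laws. For each presheaf~$P$ and world~$w$, I would define the component as $\eta_{P,w}(p) = \tuple{w,\reflRm,p}$, inhabiting $(\DiaFun{P})_w = \Sigma_v.\, w \Rm v \times P_v$ by taking the reflexive modal step.

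The first check is that each $\eta_P$ is a natural transformation between presheaves. Unfolding $\wkPsh{\DiaFun{P}}{}{}$ from the proof of~\cref{prop:psh-dia-strong-fun}, for $i : w \Ri w'$ and $p : P_w$ the value $\wkPsh{\DiaFun{P}}{i}{(\eta_{P,w}\,p)}$ equals $\tuple{v',m',\wkPsh{P}{i'}{p}}$ where $\tuple{v',m',i'} = \factor{i,\reflRm}$. The coherence law $\factor{i,\reflRm} = \tuple{\reflRm,i}$ collapses this to $\tuple{w',\reflRm,\wkPsh{P}{i}{p}} = \eta_{P,w'}(\wkPsh{P}{i}{p})$, as required. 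Naturality in $P$, i.e.\ $\eta_Q \circ f = \DiaFun{f} \circ \eta_P$ for any $f : P \natto Q$, then follows immediately from the pointwise definitions of $\DiaFun{f}$ and $\eta$.

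To see that $\eta$ makes $\DiaFun$ strong pointed, I would verify the compatibility equation $s_{P,Q} \circ (\mathrm{id}_P \times \eta_Q) = \eta_{P \times Q}$, where the strength~$s$ is the one implicit in the interpretation of $\letSLTm$, namely $s_{P,Q}(p,\tuple{v,m,q}) = \tuple{v,m,\wkPsh{P}{(\incl{m})}{p},q}$. Applied to $(p,q) \in (P \times Q)_w$, the left\hyp{}hand side unfolds to $\tuple{w,\reflRm,\wkPsh{P}{(\incl{\reflRm})}{p},q}$, and the second coherence law $\incl{\reflRm} = \reflRi$ together with functoriality $\wkPsh{P}{\reflRi}{p} = p$ further reduces this to $\tuple{w,\reflRm,p,q} = \eta_{P \times Q,w}(p,q)$. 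The main subtlety, though largely mechanical, is ensuring that both coherence laws on~$\reflRm$ are discharged in the right places — one to simplify the factored witness past a reflexive modal step when checking naturality, and the other to recognise a reflexive intuitionistic transport on the environment component when checking strength compatibility.
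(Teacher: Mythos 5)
Your proposal is correct and follows essentially the same route as the paper: define $\point$ via $\reflRm$ and discharge the strength compatibility with $\incl{\reflRm}=\reflRi$. You additionally make explicit that the other coherence law, $\factor{i,\reflRm}=\tuple{\reflRm,i}$, is what makes each component a presheaf morphism (naturality in the world), which the paper leaves implicit but which is indeed required.
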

\begin{proof}
  To show that $\DiaFun$ is pointed, we define $\point : P \natto
  \DiaFun{P}$ using function~$\reflRm$, and then use the coherence
  law~$\incl{\reflRm}=\reflRi$ to show that $\point$ is a strong
  natural transformation.
\end{proof}

\cref{prop:cs:rlc,prop:psh-dia-strong-pointed} give us
that~$\Psh[\WCat]$ is a categorical model of \RLC for
\RLC{}\hyp{}frames.
The interpretation of the modal fragment of \RLC can be given
explicitly in $\Psh[\WCat]$ as:
\begin{equation*}
  \begin{array}{>{\evallbracket\ }l@{\ \evalrbracket} @{\$} l @{\;}c@{\;} l}
    \returnRLTm{t}    & \gamma & = & \pair{\reflRm}{\eval{t}\$ \gamma} \\
    \letInRLTm{x}{t}{u} & \gamma & = & \pair{m}{\eval{u}\$ \pair{\wkPsh{\eval{\Gamma}}{(\incl{m})}{\gamma}}{a}} \\
    \multicolumn{4}{l}{\quad\quad \text{where } \pair{m : w \Rm v}{a : \eval{A}_v} = \eval{t}\$ \gamma}
  \end{array}
\end{equation*}
\noindent\emph{Semantics for \JLC}.
A proof\hyp{}relevant \JLC{}\hyp{}frame~$(W,\Ri,\Rm)$ is a
\SLC{}\hyp{}frame that exhibits:
\begin{itemize}
  \item
    function~$\transRm : \forall u, v, w.\ u \Rm v \to v \Rm w \to u
    \Rm w$, such that
    \begin{itemize}
    \item
      $\factor{i,(\transRm{m_1,m_2})} = \tuple{\transRm{m_1',m_2'}, i_2'}$
      \\ where $(m_1',i_1') = \factor{i,m_1}$ and $(m_2',i_2') = \factor{i_1',m_2}$.
    \item
      $\transRm{(\transRm{m_1,m_2}),m_3} = \transRm{m_1,(\transRm{m_2,m_3})}$
      \item $\incl{(\transRm{m_1,m_2})}=\transRi{(\incl{m_1}),(\incl{m_2})}$
    \end{itemize}
\end{itemize}

\begin{proposition}[$\DiaFun$ Strong Semimonad]\label{prop:psh-dia-strong-semimonad}
  The strong functor~$\DiaFun$ on the category of
   presheaves~$\Psh[\WCat]$ determined by a \JLC{}\hyp{}frame is
  a strong semimonad.
\end{proposition}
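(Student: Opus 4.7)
The plan is to mimic the structure of the previous two propositions: define the required extra structure on $\DiaFun$ from the extra operation $\transRm$ on the frame, and verify each coherence law for the semimonad structure using one of the stated coherence laws on $\transRm$.

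First I would define the multiplication $\mu : \DiaFun{\DiaFun{P}} \natto \DiaFun{P}$ by
\[
  \mu_w\,\tuple{v, m_1, \tuple{v', m_2, p}} \;=\; \tuple{v', \transRm{m_1,m_2}, p},
\]
using the fact that an element of $(\DiaFun{\DiaFun{P}})_w$ is a triple consisting of a world $v$, a proof $m_1 : w \Rm v$, and an element of $(\DiaFun{P})_v$ (itself a triple $\tuple{v',m_2,p}$ with $m_2 : v \Rm v'$ and $p : P_{v'}$). To see that $\mu$ is natural, I would unfold the action of $\wkPsh{\DiaFun{P}}{i}{-}$ and $\wkPsh{\DiaFun{\DiaFun{P}}}{i}{-}$ using the definition given in the proof of \cref{prop:psh-dia-strong-fun}: both require applying $\factor$ along the given $\Ri$-proof. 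The naturality square then reduces precisely to the coherence law $\factor{i,(\transRm{m_1,m_2})} = \tuple{\transRm{m_1',m_2'}, i_2'}$, since on one side we factor once against $\transRm{m_1,m_2}$, and on the other side we factor successively against $m_1$ and then $m_2$.

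Next I would check the two semimonad laws. Associativity of $\mu$ reduces, after unfolding, to the associativity coherence $\transRm{(\transRm{m_1,m_2}),m_3} = \transRm{m_1,(\transRm{m_2,m_3})}$; there is no unit law to verify since we are dealing with a semimonad rather than a monad. Finally, for strength compatibility I would recall from \cref{prop:psh-dia-strong-fun} that the strength $\sigma : P \times \DiaFun{Q} \natto \DiaFun{(P \times Q)}$ is defined using $\incl$ to transport the $P$-component along $\Ri$ when entering a modal future world. Verifying that $\mu$ is compatible with $\sigma$ in the standard sense (i.e., $\mu \circ \DiaFun{\sigma} \circ \sigma = \sigma \circ (\mathrm{id} \times \mu)$) therefore reduces to checking that transporting twice via $\incl{m_1}$ then $\incl{m_2}$ coincides with transporting once via $\incl{(\transRm{m_1,m_2})}$, which is exactly the third coherence law $\incl{(\transRm{m_1,m_2})} = \transRi{(\incl{m_1}),(\incl{m_2})}$ combined with functoriality of the presheaf action.

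The main obstacle I anticipate is not the algebraic manipulation itself---each semimonad law matches up one-for-one with a coherence hypothesis on $\transRm$---but rather the bookkeeping in the strength compatibility diagram, where one has to carefully unfold the definition of $\sigma$ at nested $\DiaFun$ types and keep track of the $\factor$-witnesses that arise from moving strengths past $\mu$. As in the proof of \cref{prop:psh-dia-strong-pointed}, I expect all such factoring obligations to be discharged by the coherence laws on $\factor$ already used to establish that $\DiaFun$ is a strong functor, so no new frame conditions are required beyond those listed.
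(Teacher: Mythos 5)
Your proposal is correct and follows essentially the same route as the paper: the multiplication is defined from $\transRm$, naturality and associativity are discharged by the $\factor$- and associativity-coherence laws on $\transRm$, and strength compatibility is exactly the coherence law $\incl{(\transRm{m_1,m_2})} = \transRi{(\incl{m_1}),(\incl{m_2})}$. The paper's proof is merely a terser sketch of the same argument.
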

\begin{proof}
  We define $\join : \DiaFun{\DiaFun{P}} \natto \DiaFun{P}$ using the
  function~$\transRm$ to show $\DiaFun$ is a semimonad, and then use
  the coherence law~$\incl{(\transRm{m_1,m_2})} =
  \transRi{(\incl{m_1}),(\incl{m_2})}$ to show that $\join$ is a
  strong natural transformation---giving us that $\join$ is a strong
  semimonad.
\end{proof}

\cref{prop:cs:jlc,prop:psh-dia-strong-semimonad} give us
that~$\Psh[\WCat]$ is a categorical model of \JLC for
\JLC{}\hyp{}frames.
The interpretation of the modal fragment of \JLC can be given
explicitly in $\Psh[\WCat]$ as:
\begin{equation*}
  \begin{array}{>{\evallbracket\ }l@{\ \evalrbracket} @{\$} l @{\;}c@{\;} l}
    \letMapInJLTm{x}{t}{u} & \gamma & = & \pair{m}{\eval{u}\$ \pair{\wkPsh{\eval{\Gamma}}{(\incl{m})}{\gamma}}{a}} \\
    \multicolumn{4}{l}{\quad\quad \text{where } \pair{m : w \Rm v}{a : \eval{A}_v} = \eval{t}\$ \gamma}\\
    \letJoinInJLTm{x}{t}{u} & \gamma & = & \pair{\transRm{m,m'}}{b} \\
     \multicolumn{4}{l}{\quad\quad \text{where } \pair{m : w \Rm v}{a : \eval{A}_v} = \eval{t}\$ \gamma}\\
     \multicolumn{4}{l}{\quad\quad\quad\quad\quad \pair{m' : v \Rm v'}{b : \eval{B}_{v'}} = \eval{u}\$ \pair{\wkPsh{\eval{\Gamma}}{(\incl{m})}{\gamma}}{a}}\\
  \end{array}
\end{equation*}

\noindent\emph{Semantics for \MLC}.
A proof\hyp{}relevant \MLC{}\hyp{}frame~$F = (W,\Ri,\Rm)$ is both a
\RLC{}\hyp{}frame and \JLC{}\hyp{}frame that further exhibits the unit
laws~$\transRm{\reflRm,m}=m$ and $\transRm{m,\reflRm}=m$.
That is, proofs of $\Rm$ now form a category $\WmCat$ with a functor
$\WmCat \to \WiCat$ given by function~$\incl$.

\begin{proposition}[$\DiaFun$ Strong Monad]\label{prop:psh-dia-strong-monad}
    The strong functor~$\DiaFun$ on the category of
    presheaves~$\Psh[\WCat]$ determined by a \MLC{}\hyp{}frame is a
    strong monad.
\end{proposition}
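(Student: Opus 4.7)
The plan is to assemble the strong monad structure on $\DiaFun$ by combining the strong pointed structure from \cref{prop:psh-dia-strong-pointed} with the strong semimonad structure from \cref{prop:psh-dia-strong-semimonad}, both of which apply because a proof-relevant \MLC{}\hyp{}frame is simultaneously an \RLC{}\hyp{}frame and a \JLC{}\hyp{}frame. This immediately supplies the unit $\point : P \natto \DiaFun{P}$ defined via $\reflRm$, the multiplication $\join : \DiaFun{\DiaFun{P}} \natto \DiaFun{P}$ defined via $\transRm$, and the tensorial strength derived from $\incl$, with $\point$ and $\join$ already known to be strong natural transformations and $\join$ already associative. What remains is to verify the two monad unit laws; all strong-monad coherences with strength are subsumed by strength of $\point$ and $\join$.

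For the unit laws, unfolding the definitions shows that $\join \circ \point_{\DiaFun{P}}$ sends $\tuple{v, m, p}$ to $\tuple{v, \transRm{\reflRm, m}, p}$, while $\join \circ \DiaFun{\point_P}$ sends it to $\tuple{v, \transRm{m, \reflRm}, p}$. The new \MLC{}\hyp{}frame coherences $\transRm{\reflRm, m} = m$ and $\transRm{m, \reflRm} = m$ collapse each composite pointwise to the identity on $\DiaFun{P}$, yielding both unit laws on the nose. Associativity of $\join$, and its compatibility with strength, are imported unchanged from \cref{prop:psh-dia-strong-semimonad}; strength of $\point$ is imported from \cref{prop:psh-dia-strong-pointed}, which discharged its coherence using $\incl{\reflRm} = \reflRi$.

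The main obstacle is purely bookkeeping: each diagram unfolds to an equation between tuples in nested $\sum$-types, and one must carefully align the $\Rm$-components so that the relevant frame coherence applies. No new conceptual ideas are required beyond what the \MLC{}\hyp{}frame definition already packages; the entire proof is a reassembly of the earlier strong-pointed and strong-semimonad proofs augmented by the two short unit-law verifications above.
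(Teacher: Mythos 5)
Your proof is correct and follows the paper's own route exactly: assemble the strong pointed and strong semimonad structures from \cref{prop:psh-dia-strong-pointed,prop:psh-dia-strong-semimonad}, then discharge the two monad unit laws using the \MLC{}\hyp{}frame coherences $\transRm{\reflRm,m}=m$ and $\transRm{m,\reflRm}=m$ (the unit laws of $\WmCat$). Your explicit unfolding of $\join \circ \point_{\DiaFun{P}}$ and $\join \circ \DiaFun{\point_P}$ is just a more detailed rendering of the same argument.
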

\begin{proof}
  We apply
  \cref{prop:psh-dia-strong-fun,prop:psh-dia-strong-pointed,prop:psh-dia-strong-semimonad}
  to show that the functor~$\DiaFun$ is a strong pointed semimonad.
  We then use the unit laws of the category~$\WmCat$ to prove the unit
  laws of the monad~$\DiaFun$.
\end{proof}

\cref{prop:cs:mlc,prop:psh-dia-strong-monad} give us
that~$\Psh[\WCat]$ is a categorical model of \MLC for
\MLC{}\hyp{}frames.
The interpretation of the modal fragment of \MLC can be given
explicitly in $\Psh[\WCat]$ as:
\begin{equation*}
  \begin{array}{>{\evallbracket\ }l@{\ \evalrbracket} @{\$} l @{\;}c@{\;} l}
    \returnLLTm{t}    & \gamma & = & \pair{\reflRm}{\eval{t}\$ \gamma} \\
    \letInLLTm{x}{t}{u} & \gamma & = & \pair{\transRm{m,m'}}{b} \\
     \multicolumn{4}{l}{\quad\quad \text{where } \pair{m : w \Rm v}{a : \eval{A}_v} = \eval{t}\$ \gamma}\\
     \multicolumn{4}{l}{\quad\quad\quad\quad\quad \pair{m' : v \Rm v'}{b : \eval{B}_{v'}} = \eval{u}\$ \pair{\wkPsh{\eval{\Gamma}}{(\incl{m})}{\gamma}}{a}}\\
  \end{array}
\end{equation*}

\begin{theorem}[Soundness of proof\hyp{}relevant possible\hyp{}world semantics]
  For any two terms~$\Gamma \vdash t,u : A$ in
  \SLC{}/\RLC{}/\JLC{}/\MLC{}, if $\Gamma \vdash t \thyeq u : A$ then
  $\eval{t} = \eval{u}$ for an arbitrary proof\hyp{}relevant
  possible\hyp{}world model determined by the respective
  \SLC{}/\RLC{}/\JLC{}/\MLC{}\hyp{}frames.
\end{theorem}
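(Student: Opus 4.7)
The plan is to reduce the claim to results already established in this section. Propositions~\ref{prop:cs:slc}, \ref{prop:cs:rlc}, \ref{prop:cs:jlc} and \ref{prop:cs:mlc} each assert that convertible terms receive equal interpretations in every categorical model of the respective calculus, while Propositions~\ref{prop:psh-dia-strong-fun}, \ref{prop:psh-dia-strong-pointed}, \ref{prop:psh-dia-strong-semimonad} and \ref{prop:psh-dia-strong-monad} show that the presheaf category~$\Psh[\WCat]$ determined by a proof-relevant \SLC/\RLC/\JLC/\MLC-frame carries the appropriate $\DiaFun$-structure. Since $\Psh[\WCat]$ is cartesian closed via the standard pointwise construction, a proof-relevant model $M = (F,V)$ makes $\Psh[\WCat]$ into a categorical model of the corresponding calculus once the valuation $V$ is used to interpret base types. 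The soundness of the explicit possible-world interpretation then follows.

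The one step that requires care in this reduction is verifying that the direct, clause-by-clause interpretation $\eval{\cdot}$ given in this section agrees on the nose with the abstract interpretation one obtains from the categorical model. The \STLC fragment is immediate because the cartesian closed structure of $\Psh[\WCat]$ is pointwise, matching the displayed clauses. For the modal fragment, one observes that the explicit clauses for $\letSLTm$, $\letRLTm$, $\letMapJLTm$, $\returnRLTm$, $\letJoinJLTm$, $\returnLLTm$ and $\letLLTm$ unfold to precisely the post-compositions with the tensorial strength (built from $\incl$), the point (built from $\reflRm$) and the multiplication (built from $\transRm$) that were constructed in the proofs of those propositions.

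Alternatively, one can give a direct proof by induction on the derivation of $\Gamma \vdash t \thyeq u : A$. The congruence rules follow by compositionality, the \STLC equations by routine pointwise reasoning, and the modal equations by the coherence laws of the frame: the return rules of \RLC and \MLC invoke $\incl{\reflRm} = \reflRi$ and the unit laws of $\transRm$; the associativity rules of \JLC and \MLC invoke associativity of $\transRm$ together with $\incl{(\transRm{m_1,m_2})} = \transRi{(\incl{m_1}),(\incl{m_2})}$; and the $\beta$- and $\eta$-rules for $\letSLTm$-style eliminators invoke the coherence laws of $\factor$. The main obstacle in either approach is the auxiliary bookkeeping around substitution and weakening: before any $\beta$-rule can be discharged, one needs substitution and weakening lemmas for $\eval{\cdot}$, proved by mutual induction on terms, whose modal cases lean on naturality of the environment transport $\wkPsh{\eval{\Gamma}}{}{}$ and thus on the functoriality conditions for $\factor$.
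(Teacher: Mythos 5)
Your proposal matches the paper's own proof: the paper likewise derives the theorem by combining the strong functor/pointed/semimonad/monad propositions for $\Psh[\WCat]$ with the categorical-semantics propositions for each calculus, concluding soundness for possible-world models via soundness for categorical models. Your additional check that the explicit clause-by-clause interpreter agrees with the abstract categorical interpretation is a real step the paper leaves implicit, and your sketched direct induction is a valid alternative, but neither changes the essential argument.
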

\begin{proof}
  
  Applying
  \cref{prop:psh-dia-strong-fun,prop:psh-dia-strong-pointed,prop:psh-dia-strong-semimonad,prop:psh-dia-strong-monad}
  to \cref{prop:cs:slc,prop:cs:rlc,prop:cs:jlc,prop:cs:mlc}
  accordingly gives us that the category~$\Psh[\WCat]$ determined by a
  \SLC{}/\RLC{}/\JLC{}/\MLC{}\hyp{}frame is a categorical model of the
  respective calculus.
  As a result, we get the soundness of the equational theory for
  possible\hyp{}world models via soundness of the equational theory
  for categorical models.
\end{proof}

\section{Normalization, completeness and inadmissibility results}
\label{sec:nci}

Catarina Coquand~\cite{Coquand93,Coquand02} proved normalization for
\STLC in the proof assistant Alf~\cite{MagnussonN93} by constructing
an instance of Mitchell and Moggi's proof\hyp{}relevant
possible\hyp{}world semantics.
This model\hyp{}based approach to normalization, known as
Normalization by Evaluation~(NbE)~\cite{BergerS91,BergerES98},
dispenses with tedious syntactic reasoning that typically complicate
normalization proofs.
In this section, we extend Coquand's result to lax modal lambda
calculi and observe corollaries including completeness and
inadmissibility of irrelevant modal axioms.

The objective of NbE is to define a function~$\norm : \Tm{\Gamma}{A}
\to \Nf{\Gamma}{A}$, assigning a \emph{normal form} to every term in
the calculus.
We write $\Tm{\Gamma}{A}$ to denote all terms~$\Gamma \vdash t : A$
and $\Nf{\Gamma}{A}$ to denote all normal
forms~$\Gamma \vdashNf n : A$.
The normal form judgments~$\Gamma \vdashNf n : A$ are defined in
\cref{fig:slc-nf} alongside \emph{neutral term} judgments~$\Gamma
\vdashNe n : A$, which can be roughly understood as
``straight\hyp{}forward'' inferences that do not involve introduction
rules.
\begin{figure}[t]
  \begin{mathpar}
    \inferrule[Ne/Var]{%
      \Gamma \vdashVar x : A
    }{%
      \Gamma \vdashNe x : A
    }%

    \inferrule[Nf/Up]{%
      \Gamma \vdashNe n : \BaseTy
    }{%
      \Gamma \vdashNf \upTm{n} : \BaseTy
    }%

    \inferrule[Nf/Unit]{%
    }{%
      \Gamma \vdashNf \unitTm : \UnitTy
    }%
    
    \inferrule[Ne/$\ProdTy$\nbhyp{}Elim\nbhyp{}1]{%
      \Gamma \vdashNe n : A \ProdTy B
    }{%
      \Gamma \vdashNe \fstTm{n} : A
    }%

    \inferrule[Ne/$\ProdTy$\nbhyp{}Elim\nbhyp{}2]{%
      \Gamma \vdashNe n : A \ProdTy B
    }{%
      \Gamma \vdashNe \sndTm{n} : B
    }%

    \inferrule[Nf/$\ProdTy$\nbhyp{}Intro]{%
      \Gamma \vdashNf n : A\\
      \Gamma \vdashNf m : B
    }{%
      \Gamma \vdashNf \pairTm{n,m} : A \ProdTy B
    }%

    \inferrule[Nf/$\FunTy$\nbhyp{}Intro]{%
      \ExtCtx{\Gamma}{x : A} \vdashNf n : B
    }{%
      \Gamma \vdashNf \lamTm{x}{n} : A \FunTy B
    }%

    \inferrule[Ne/$\FunTy$\nbhyp{}Elim]{%
      \Gamma \vdashNe n : A \FunTy B\\
      \Gamma \vdashNf m : A
    }{%
      \Gamma \vdashNe \appTm{n,m} : B
    }%

    \inferrule[NF/$\DiaTy$\nbhyp{}Letmap/\SL]{%
      \Gamma  \vdashNe n : \DiaTy{A} \\
      \ExtCtx{\Gamma}{x : A} \vdashNf m : B
    }{%
      \Gamma \vdashNf \letInSLTm{x}{n}{m} : \DiaTy{B}
    }

  \end{mathpar}
  \caption{Neutral terms and Normal forms for \SLC}
  \label{fig:slc-nf}
\end{figure}

To define $\norm$ for \SLC, we construct a possible\hyp{}world
model~$\tuple{N,V}$, known as the NbE model, with a \SLC{}\hyp{}frame
$N = \triple{\Ctx}{\Wk{}{}}{\RSL{}}$ consisting of contexts for
worlds, the context inclusion relation $\Wk{}{}$ for $\Ri$, and the
accessibility relation $\RSL{}{}$ for $\Rm$.
The valuation is given by neutral terms as
$V_{\BaseTy,\Gamma} = \Ne{\Gamma}{\BaseTy}$.
The relation~$\RSL{}{}$ can be defined inductively as below.
This definition states that $\RSL{\Gamma}{\Delta}$ if and only if
$\Delta = \Gamma , x : A$ for some variable~$x$ (not in $\Gamma$) and
type~$A$ such that there exists a neutral term~$\Gamma \vdash n :
\DiaTy{A}$.
\begin{mathpar}
  \inferrule[]{%
    \Gamma  \vdashNe n : \DiaTy{A} \quad (x\ \text{not in}\ \Gamma)
  }{%
    \varExtOnce{n} : \RSL{\Gamma}{(\ExtCtx{\Gamma}{x : A})}
  }
\end{mathpar}
The proof\hyp{}relevant relation~$\RSL{}{}$ is neither reflexive nor
transitive, but is included in the relation~$\Wk{}{}$ since we can
define a function $\incl : \forall \Gamma, \Delta.\,
\RSL{\Gamma}{\Delta} \to \Wk{\Gamma}{\Delta}$.
We can also show that the \SLC{}\hyp{}frame~$N$ satisfies the forward
confluence condition by defining a function~$\factor : \forall
\Gamma,\Gamma',\Delta.\, \Wk{\Gamma}{\Gamma'} \to \RSL{\Gamma}{\Delta}
\to \exists \Delta'.\, (\RSL{\Gamma'}{\Delta'} \times
\Wk{\Delta}{\Delta'})$.

By construction, we obtain an interpretation of terms~$\eval :
\Tm{\Gamma}{A} \to (\forall \Delta .\, \eval{\Gamma}_\Delta \to
\eval{A}_\Delta)$ in the NbE model as an instance of the generic
interpreter for an arbitrary possible\hyp{}world model
(\cref{sec:prks}).
This model exhibits two type\hyp{}indexed functions characteristic of
NbE models known as $\reify$ and $\reflect$, which are defined for the
modal fragment as follows:
\begin{align*}
  & \begin{array}{l @{\$} l @{\;}c@{\;} l}
      \multicolumn{4}{l}{\reify_{A} : \forall\, \Gamma.\, \eval{A}_\Gamma \to \Gamma \vdashNf A} \\
    \ldots \\
    \reify_{\DiaTy{A};\Gamma}   & \pair{\varExtOnce{n} : \RSL{\Gamma}{(\ExtCtx{\Gamma}{x : B})}}{a : \eval{A}_{\ExtCtx{\Gamma}{x : B}}} & = & \letInSLTm{x}{n}{(\reify_{A;(\ExtCtx{\Gamma}{x : B})}{a})}
    \end{array}\\[\the\abovedisplayskip]
  & \begin{array}{l @{\$} l @{\;}c@{\;} l}
      \multicolumn{4}{l}{\reflect_{A} : \forall\, \Gamma.\, \Gamma \vdashNe A \to \eval{A}_\Gamma}\\
    \ldots \\
    \reflect_{\DiaTy{A};\Gamma} & n& = & \pair{\varExtOnce{n}}{\reflect_{A;(\ExtCtx{\Gamma}{x : A})}{x}}
    \end{array}
\end{align*}
The function $\reify$ is a type\hyp{}indexed natural transformation,
which for the case of type~$\DiaTy{A}$ in some context $\Gamma$, is
given as argument an element of type $\eval{\DiaTy{A}}_\Gamma$, which
is $\Sigma_\Delta. \RSL{\Gamma}{\Delta} \times \eval{A}_\Delta$.
The first component gives us a neutral~$\Gamma \vdashNe n :
\DiaTy{B}$, and recursively reifying the second component gives us a
normal form of~$\ExtCtx{\Gamma}{x : B} \vdashNf
\reify_{A;(\ExtCtx{\Gamma}{x : B})}{a} : A$.
We use these to construct the normal form $\Gamma \vdashNf
\letInSLTm{x}{n}{(\reify_{A;(\ExtCtx{\Gamma}{x : B})}{a})} : \DiaTy{A}$,
which is the desired result.
The function $\reflect$, on the other hand, constructs a value pair of
type~$\eval{\DiaTy{A}}_\Gamma = \Sigma_\Delta. \RSL{\Gamma}{\Delta}
\times \eval{A}_\Delta$ using the given neutral~$\Gamma \vdashNe n :
\DiaTy{A}$ and picking $\ExtCtx{\Gamma}{x : A}$ (with a fresh
variable~$x$ not in $\Gamma$) as the witness of $\Delta$ to obtain a
value of type $\eval{A}_{\ExtCtx{\Gamma}{x : A}}$ by reflecting the
the variable~$x$ as a neutral term~$\ExtCtx{\Gamma}{x : A} \vdashNe x
: A$.
These functions are key to defining~$\quote : (\forall \Delta .\,
\eval{\Gamma}_\Delta \to \eval{A}_\Delta) \to \Nf{\Gamma}{A}$, which
in turn gives us the function~$\norm$:
$$\norm{t} = \quote{\eval{t}}$$

\begin{figure}
\begin{mathpar}
  \inferrule[\RL/NF/$\DiaTy$\nbhyp{}Return]{%
    \Gamma \vdashNf n : A }{%
    \Gamma \vdashNf \returnRLTm{n} : \DiaTy{A}
  }

  \inferrule[\RL/NF/$\DiaTy$\nbhyp{}Letmap]{%
    \Gamma  \vdashNe n : \DiaTy{A} \\
    \ExtCtx{\Gamma}{x : A} \vdashNf m : B }{%
    \Gamma \vdashNf \letInRLTm{x}{n}{m} : \DiaTy{B}
  }
  
    \inferrule[\JL/NF/$\DiaTy$\nbhyp{}Letmap]{%
    \Gamma  \vdashNe n : \DiaTy{A} \\
    \ExtCtx{\Gamma}{x : A} \vdashNf m : B
  }{%
    \Gamma \vdashNf \letMapInJLTm{x}{n}{m} : \DiaTy{B}
  }

  \inferrule[\JL/NF/$\DiaTy$\nbhyp{}Let]{%
    \Gamma  \vdashNe n : \DiaTy{A} \\
    \ExtCtx{\Gamma}{x : A} \vdashNf m : \DiaTy{B}
  }{%
    \Gamma \vdashNf \letJoinInJLTm{x}{n}{m} : \DiaTy{B}
  }

  \inferrule[\LL/NF/$\DiaTy$\nbhyp{}Return]{%
    \Gamma  \vdashNf n : A
  }{%
    \Gamma \vdashNf \returnLLTm{n} : \DiaTy{A}
  }

  \inferrule[\LL/NF/$\DiaTy$\nbhyp{}Let]{%
    \Gamma  \vdashNe n : \DiaTy{A} \\
    \ExtCtx{\Gamma}{x : A} \vdashNf m : \DiaTy{B}
  }{%
    \Gamma \vdashNf \letInLLTm{x}{n}{m} : \DiaTy{B}
  }

\end{mathpar}
\caption{Normal forms for modal fragments of \RLC, \JLC and \MLC}
\label{fig:rjllc-nf}
\end{figure}

NbE models can be constructed likewise for the calculi~\RLC,~\JLC and~\MLC.
The normal forms of these calculi are defined in \cref{fig:rjllc-nf}.
To construct the model, we uniformly pick contexts for worlds, the
relation~$\Wk{}{}$ for $\Ri$, and the respective modal accessibility
relation defined in \cref{fig:rjllc-acc} for $\Rm$. As before, we also
pick neutrals terms for valuation.
\begin{figure}
\begin{mathpar}

  \inferrule[]{%
  }{%
    \nilExt : \RRL{\Gamma}{\Gamma}
  }

  \inferrule[]{%
    \Gamma \vdashNe n : \DiaTy{A}\\ (x\ \text{not in}\ \Gamma)
  }{%
    \varExtOnce{n} : \RRL{\Gamma}{\Gamma,x : A}
  }\\

  \inferrule[]{%
    \Gamma \vdashNe n : \DiaTy{A}\\ (x\ \text{not in}\ \Gamma)
  }{%
    \varExtOnce{n} : \RJL{\Gamma}{\Gamma,x : A} }

  \inferrule[]{%
    \Gamma  \vdashNe n : \DiaTy{A} \\
    m : \RJL{\Gamma,x : A}{\Delta}
  }{%
    \varExt{n,m} : \RJL{\Gamma}{\Delta}
  }\\

  \inferrule[]{%
  }{%
    \nilExt : \RLL{\Gamma}{\Gamma}
  }

  \inferrule[]{%
    \Gamma  \vdashNe n : \DiaTy{A} \\
    m : \RLL{\Gamma,x : A}{\Delta}
  }{%
    \varExt{n,m} : \RLL{\Gamma}{\Delta}
  }
\end{mathpar}
\caption{Modal accessibility relations for \RLC, \JLC and \MLC}
\label{fig:rjllc-acc}
\end{figure}
Observe that relation~$\RLL{}{}$ satisfies the inclusion condition (we
can define function~$\incl$) and is reflexive (we can define
$\reflRm$) and transitive (we can define $\transRm$).
On the other hand, the relations~$\RRL{}{}$ and $\RJL{}{}$ both
satisfy the inclusion condition and are respectively reflexive and
transitive, but not the other way around.
The main idea behind the definitions of these relations is that they
imitate the binding structure of the normal forms in
\cref{fig:rjllc-nf}.

\begin{theorem}[Correctness of normalization]\label{prop:norm:slc-correct}
  For all terms~$\Gamma \vdash t : A $ in \SLC{}/\RLC{}/\JLC{}/\MLC{},
  there exists a normal form~$\Gamma \vdashNf n : A $ such that $t
  \thyeq n$.
\end{theorem}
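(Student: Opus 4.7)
The plan is to prove this by Normalization by Evaluation, extending the construction sketched above for \SLC to all four calculi. For each calculus I would construct an NbE model $\tuple{N,V}$ whose frame $N = \tuple{\Ctx, \Wk{}{}, \cesymb}$ uses contexts for worlds, context inclusion $\Wk{}{}$ for $\Ri$, and the calculus-specific modal accessibility relation from \cref{fig:rjllc-acc} for $\Rm$, with neutral-term valuation $V_{\BaseTy,\Gamma} = \Ne{\Gamma}{\BaseTy}$. I would then verify that each such frame satisfies its required conditions: inclusion and forward confluence hold uniformly by embedding the modal witness into $\Wk{}{}$ and transporting it through a weakening, while $\RRL{}{}$ admits $\reflRm$, $\RJL{}{}$ admits $\transRm$, and $\RLL{}{}$ admits both; the accompanying coherence laws follow because the witnesses involved are essentially lists of neutrals. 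Instantiating the generic interpreter of \cref{sec:prks} then yields $\eval{-} : \Tm{\Gamma}{A} \to (\forall \Delta.\, \eval{\Gamma}_\Delta \to \eval{A}_\Delta)$.

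Next, I would define $\reify_A$ and $\reflect_A$ by mutual type-recursion, following the clauses displayed for \SLC and extending them so that $\reify_{\DiaTy A}$ in each calculus consumes the accessibility witness in $\Rm$ and emits the corresponding nested sequence of let-bindings, exactly mirroring the normal-form judgment of \cref{fig:rjllc-nf}. I would build the identity environment $\idEnv_\Gamma : \eval{\Gamma}_\Gamma$ by reflecting each variable, and set $\norm{t} = \reify_A{(\eval{t}\$\idEnv_\Gamma)}$; by construction $\Gamma \vdashNf \norm{t} : A$, so what remains is the equation $t \thyeq \norm{t}$.

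For this equation I would introduce a Kripke logical relation $\LR_A \subseteq \Tm{\Gamma}{A} \times \eval{A}_\Gamma$ indexed by the context $\Gamma$ and defined by induction on $A$: at base type, $t \LR_\BaseTy n$ iff $t \thyeq \upTm{n}$; at $\UnitTy$ and $\ProdTy$, componentwise; at $\FunTy$, closed under weakening and application against $\LR$-related arguments; and at $\DiaTy A$, by recursion on the accessibility witness, requiring $t$ to be $\thyeq$-equivalent to the corresponding nested let-binding whose body is $\LR_A$-related to the stored semantic value at the appropriately extended context. I would then prove simultaneously by induction on types an adequacy lemma (if $t \LR_A a$ then $t \thyeq \reify_A{a}$) and a reflectibility lemma (if $\Gamma \vdashNe n : A$ then $n \LR_A \reflect_A{n}$), and by induction on typing derivations the fundamental theorem: whenever a parallel substitution $\sigma$ and environment $\gamma$ are pointwise $\LR$-related, $\subst{t}{\sigma} \LR_A \eval{t}\$\gamma$. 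Reflectibility applied to variables gives that the identity substitution is related to $\idEnv_\Gamma$, so combining the fundamental theorem with adequacy yields $t \thyeq \norm{t}$.

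The main obstacle will be designing the $\DiaTy A$ clause of $\LR$ uniformly across the four calculi. Because $\eval{\DiaTy A}_\Gamma$ packages a calculus-specific accessibility witness whose corresponding $\reify$ output is a nested chain of let-bindings, the relation must recurse through this structure while simultaneously enforcing the Kripke closure under context extensions and the calculus-specific equations that permit bindings to be fused, reassociated, or eliminated, most notably \SL/$\DiaTy$\nbhyp{}$\beta$, \RL/$\DiaTy$\nbhyp{}$\beta_2$, \JL/$\DiaTy$\nbhyp{}com, \JL/$\DiaTy$\nbhyp{}ass and their \MLC counterparts. Because the accessibility relations in \cref{fig:rjllc-acc} were deliberately engineered to mirror the normal-form binding structure, verifying these coherences should reduce to routine substitution-and-weakening calculations, but threading the fundamental theorem through the five modal let/return constructs across the four calculi is where the bulk of the technical bookkeeping lives.
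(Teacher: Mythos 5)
Your proposal follows essentially the same route as the paper: it instantiates the proof\hyp{}relevant possible\hyp{}world semantics with the syntactic NbE frames of \cref{fig:rjllc-acc}, obtains $\norm$ from $\reify$, $\reflect$ and the identity environment, and discharges $t \thyeq \norm{t}$ by a Kripke logical relation with adequacy, reflectibility and a fundamental theorem---which is precisely the ``standard logical relation based argument'' the paper invokes. Your additional discussion of how the $\DiaTy{A}$ clause of the relation must recurse through the calculus\hyp{}specific accessibility witnesses is a correct elaboration of detail the paper leaves implicit, not a deviation.
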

\begin{proof}
 By virtue of the function~$\norm$, we get that every term~$t$ has a
 normal form~$\norm{t}$. Using a standard logical relation based
 argument we can further show that $t \thyeq \norm{t}$.
\end{proof}

\begin{corollary}[Completeness of proof\hyp{}relevant possible\hyp{}world semantics]
  For any two terms~$\Gamma \vdash t,u : A$ in
  \SLC{}/\RLC{}/\JLC{}/\MLC{}, if $\eval{t} = \eval{u}$ in all
  proof\hyp{}relevant possible\hyp{}world models determined by the
  respective \SLC{}/\RLC{}/\JLC{}/\MLC{}\hyp{}frames, then $\Gamma
  \vdash t \thyeq u : A$.
\end{corollary}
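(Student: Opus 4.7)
The plan is to derive completeness from the NbE model constructed in this section together with \cref{prop:norm:slc-correct}. The key observation is that the NbE model~$\tuple{N,V}$ for each of \SLC/\RLC/\JLC/\MLC is itself an instance of a proof\hyp{}relevant possible\hyp{}world model over the corresponding frame, so any equality which holds in \emph{all} such models in particular holds in the NbE model.

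More concretely, I would proceed as follows. First, observe that the normalization function factors as $\norm{t} = \quote{\eval{t}}$, where $\eval{t}$ is the interpretation of~$t$ in the NbE model and $\quote$ is built from $\reify$ applied pointwise to an ``identity environment''~$\idEnv{} : \eval{\Gamma}_\Gamma$ obtained by reflecting the context variables. Hence if $\eval{t} = \eval{u}$ holds in every proof\hyp{}relevant possible\hyp{}world model over the respective frame, it holds in particular in the NbE model, and then applying $\quote$ to both sides yields $\norm{t} = \norm{u}$ as (definitionally equal) normal forms. Second, by \cref{prop:norm:slc-correct} we have $\Gamma \vdash t \thyeq \norm{t} : A$ and $\Gamma \vdash u \thyeq \norm{u} : A$. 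Chaining these with the symmetry and transitivity of~$\thyeq$ gives $\Gamma \vdash t \thyeq u : A$, which is the desired conclusion.

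The only step that requires genuine care, and hence the place I would expect the main obstacle, is the check that the NbE model really is an instance of the proof\hyp{}relevant possible\hyp{}world semantics of the respective calculus: one must verify that $N = \triple{\Ctx}{\Wk{}{}}{\RSL{}}$ (respectively $\RRL{}$, $\RJL{}$, $\RLL{}$) satisfies all the relevant structural data introduced in \cref{sec:prks}, including the reflexivity/transitivity of $\Wk{}{}$, the $\factor$ and $\incl$ witnesses of forward confluence and inclusion, and the additional $\reflRm$/$\transRm$ operators with their coherence laws for \RLC, \JLC and \MLC. Most of these have already been indicated in the construction of the NbE model, so the remaining work is the routine verification of the coherence equations, which follows by induction on the derivations of $\Wk{}{}$ and of the respective modal accessibility relations imitating the binding structure of the normal forms in \cref{fig:rjllc-nf}.

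With that verification in hand, completeness becomes a two\hyp{}line corollary: evaluate both terms in the NbE model, quote the (equal) interpretations to obtain a common normal form, and transport the equality back to~$\thyeq$ via \cref{prop:norm:slc-correct}. This pattern is uniform across all four calculi, so a single argument schema suffices, parameterized only by which NbE frame and which $\reify$/$\reflect$ clauses are used.
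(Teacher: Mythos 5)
Your proposal is correct and follows essentially the same route as the paper: instantiate the hypothesis at the NbE model, apply $\quote$ to conclude $\norm{t} = \norm{u}$, and then use \cref{prop:norm:slc-correct} together with symmetry and transitivity of~$\thyeq$ to conclude $t \thyeq u$. The additional verification you flag (that the NbE frames satisfy the required structural data) is indeed needed but is already discharged in the paper's construction of the NbE models, so the paper treats it as given.
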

\begin{proof}
  In the respective NbE model, we know $\eval{t} = \eval{u}$ implies
  $\norm{t} = \norm{u}$ by definition of $\norm$. By
  \cref{prop:norm:slc-correct}, we also know $t \thyeq \norm{t}$ and
  $u \thyeq \norm{u}$, thus~$t \thyeq u$.
\end{proof}

\begin{corollary}[Inadmissibility of irrelevant modal axioms]
  The axiom~\RA is not derivable in \SLC or \JLC, and similarly the
  axiom~\JA is not derivable in \SLC or \RLC.
\end{corollary}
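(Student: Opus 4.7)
The plan is to use the soundness theorem for proof\hyp{}relevant possible\hyp{}world semantics established in the previous section and argue by contradiction, exhibiting in each case a concrete frame in which the candidate axiom is not inhabited at some world. Since every derivable closed term $\EmptyCtx \vdash t : A$ must, by soundness, yield an element $\eval{t}_w\,() \in \eval{A}_w$ at every world~$w$ of every model, it suffices to build a model of the relevant calculus whose interpretation of the axiom type is empty at some world.

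For \RA{} in \SLC{} and \JLC{}, I would take the singleton frame $W = \{\star\}$ with $\Ri = \{(\star,\star)\}$ and $\Rm = \emptyset$. Because $\Rm$ is empty, the inclusion, forward\hyp{}confluence, and transitivity coherences are all vacuously satisfied, so the frame is simultaneously an \SLC{}\hyp{}frame and a \JLC{}\hyp{}frame (it is not an \RLC{}\hyp{}frame, which is exactly the point). Taking $V_{\BaseTy}(\star)$ to be a singleton, the interpretation $\eval{\BaseTy}_\star$ is inhabited while $\eval{\DiaTy{\BaseTy}}_\star = \textstyle\sum_v\, \star \Rm v \times V_{\BaseTy}(v) = \emptyset$. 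Since there is only one world to quantify over, $\eval{\BaseTy \FunTy \DiaTy{\BaseTy}}_\star$ is also empty, so no closed term of this type can exist.

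For \JA{} in \SLC{} and \RLC{}, I would use a three\hyp{}world frame $W = \{w,v,u\}$ in which $\Rm$ realises the chain $w \Rm v \Rm u$ but does not include $(w,u)$. Concretely, for the \SLC{} case I take $\Rm = \{(w,v),(v,u),(u,u)\}$; the self\hyp{}loop at $u$ is needed so that forward confluence succeeds at the case $w \Ri u$ (where $\Ri$ is the reflexive\hyp{}transitive closure of $\Rm$, and the required witness $v'$ with $u \Rm v'$ and $v \Ri v'$ is then~$u$). For the \RLC{} case I additionally add $(w,w)$ and $(v,v)$ to make $\Rm$ reflexive while keeping it non\hyp{}transitive. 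In both cases I set $V_{\BaseTy}(u) = \{\bullet\}$ and $V_{\BaseTy}$ empty elsewhere; monotonicity is automatic. A short calculation shows $\eval{\DiaTy{\DiaTy{\BaseTy}}}_w$ is inhabited via the witnesses $w \Rm v$ and $v \Rm u$, whereas $\eval{\DiaTy{\BaseTy}}_w = \emptyset$ because $V_{\BaseTy}$ vanishes on every $\Rm$\nbhyp{}successor of $w$. Hence $\eval{\DiaTy{\DiaTy{\BaseTy}} \FunTy \DiaTy{\BaseTy}}_w$ is empty, contradicting the derivability of \JA{} via soundness.

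The main obstacle I anticipate is not the semantic inhabitation calculation, which is routine, but rather the bookkeeping of the proof\hyp{}relevant coherence laws for $\reflRi$, $\transRi$, $\factor$, and $\incl$ (plus $\reflRm$ in the \RLC{} case). These are automatic in a classical relational reading but must be witnessed by explicit functions and verified as equations in the proof\hyp{}relevant setup of \cref{sec:prks}. For the small finite frames above, one can represent $\Ri$ and $\Rm$ as propositional (at most one proof per pair), so that all coherences collapse to trivial equalities and the constructions genuinely ``fit'' into the semantic framework.
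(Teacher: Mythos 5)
Your proof is correct, but it takes a genuinely different route from the paper's. The paper obtains this corollary \emph{proof-theoretically}, as a payoff of normalization (\cref{prop:norm:slc-correct}): if the axiom were derivable it would have a normal form, neutral terms satisfy a subformula property, and a short case analysis on the normal-form judgments of \cref{fig:slc-nf,fig:rjllc-nf} shows that no derivation of $\EmptyCtx \vdashNf A \FunTy \DiaTy{A}$ (resp.\ of the \JA{} type) can exist in the relevant calculi. You instead argue \emph{semantically}, exhibiting concrete proof-relevant frames in which the interpretation of the axiom type is empty at some world and concluding by soundness. Your countermodels check out: the empty $\Rm$ is vacuously an \SLC{}- and \JLC{}-frame; your three-world chain satisfies forward confluence precisely because of the self-loop at $u$; and your observation that taking $\Ri$ and $\Rm$ propositional trivialises all the coherence laws is exactly the right way to discharge the proof-relevant bookkeeping. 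Two minor points of precision: what you actually need is not the equational soundness theorem of \cref{sec:prks} but the totality of the interpreter $\eval{-} : \Tm{\Gamma}{A} \to (\forall w.\, \eval{\Gamma}_w \to \eval{A}_w)$, which is what \cref{sec:prks} establishes by induction on typing derivations; and refuting the single instance $A = \BaseTy$ does suffice to refute the schema. As for what each approach buys: yours is more elementary in that it needs none of the NbE machinery of \cref{sec:nci}, only the possible-world interpreter, and it is the standard way to prove underivability; the paper's argument is deliberately packaged as a corollary of normalization, and in exchange yields sharper syntactic information --- the normal forms of each type can be effectively enumerated, which refutes the axiom without having to invent a countermodel.
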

\begin{proof}
  We first observe that for any neutral term~$\Gamma \vdashNe n : A$,
  the type~$A$ is a subformula of some type in context~$\Gamma$.
  We then show by case analysis that there cannot exist a derivation
  of the judgment~$\EmptyCtx \vdashNf A \FunTy \DiaTy{A}$ in \SLC or
  \JLC, and thus there cannot exist a derivation of axiom~\RA in
  either calculus---because every term must have a normal form, as
  shown by the normalization function.
  A similar argument can be given for axiom~\JA in \SLC and \RLC.
\end{proof}
  
\section{Related and further work}
\label{sec:related}

Simpson~\cite[Chapter 3]{Simpson94a} gives a comprehensive summary of
several IMLs alongside a detailed discussion of their characteristic
axioms and possible\hyp{}world semantics.
Notable early work on IMLs can be traced back to
Fischer-Servi~\cite{Servi77,Servi81}, Božić and Došen~\cite{BozicD84},
Sotirov~\cite{Sotirov80}, Plotkin and Stirling~\cite{PlotkinS86},
Wijesekera~\cite{Wijesekera90}, and many others since.

\emph{Global vs local interpretation}.
Fairtlough and Mendler~\cite{FairtloughM97} give a different
presentation of \LL.
The truth of their lax modality~$\CircTy$ is defined ``globally'' as
follows:
\begin{equation*}
  \begin{array}{l@{\;\Vdash\;} @{\;} l @{\;\text{iff}\;}c@{\;} l}
    \Mod[M],w & \CircTy{A} & & \text{for all}\ w'\ \text{s.t.}\ w \Ri w',\ \text{there exists}\ v\ \text{with}\ w' \Rm v\  \text{and}\ \Mod[M],v \Vdash A
  \end{array}
\end{equation*}
Their semantics does not require the forward confluence
condition~${\Ri^{-1} ; \Rm} \subseteq {\Rm ; \Ri^{-1}}$ since
monotonicity follows immediately the definition of the satisfaction
relation.
In the presence of forward confluence, this definition is equivalent
to the ``local'' one we have chosen in \cref{sec:overview} for the
$\DiaTy$ modality~\cite{Simpson94a,GrootSC25}, which means
$\CircTy{A}$ is true if and only if $\DiaTy{A}$ is true.
This observation can also be extended to the respectively determined
presheaf functors:
\begin{proposition}
  The presheaf functors~$\DiaFun$ and $\CircFun$ are naturally
  isomorphic.
\end{proposition}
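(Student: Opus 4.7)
The plan is to construct an explicit natural isomorphism and verify it by a Yoneda-style argument. Under the proof-relevant reading of Fairtlough and Mendler's global interpretation, an element of $\CircFun{P}_w$ is a family $f$ assigning to each proof $i : w \Ri w'$ a triple $\tuple{v, m : w' \Rm v, p : P_v}$, subject to a naturality condition: for every $i' : w' \Ri w''$ one has $f(w'', \transRi{i,i'}) = \wkPsh{\DiaFun{P}}{i'}{(f(w', i))}$. Equivalently, such an $f$ is a natural transformation from the representable presheaf on $w$ into $\DiaFun{P}$.

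I would then define the two components of the isomorphism explicitly. The forward map $\phi_P : \DiaFun{P} \to \CircFun{P}$ sends $\tuple{v,m,p} \in \DiaFun{P}_w$ to the family $\lambda w'\, i.\, \wkPsh{\DiaFun{P}}{i}{\tuple{v,m,p}}$, whose naturality in $w'$ is immediate from the functoriality of $\DiaFun{P}$ established in \cref{prop:psh-dia-strong-fun}. The backward map $\psi_P : \CircFun{P} \to \DiaFun{P}$ sends $f$ to $f(w,\reflRi)$.

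Verifying that $\phi$ and $\psi$ are mutually inverse is the heart of the proof, and each check reduces to a coherence law already recorded in \cref{sec:prks}. For $\psi_P \circ \phi_P = \mathrm{id}$, I would unfold the composition to $\wkPsh{\DiaFun{P}}{\reflRi}{\tuple{v,m,p}}$ and collapse it using the functor identity law for $\DiaFun{P}$, whose underlying equations $\factor{\reflRi, m} = \tuple{m, \reflRi}$ and $\wkPsh{P}{\reflRi}{p} = p$ recover $\tuple{v,m,p}$. For $\phi_P \circ \psi_P = \mathrm{id}$, applying the image at $i : w \Ri w'$ yields $\wkPsh{\DiaFun{P}}{i}{(f(w,\reflRi))}$, which the naturality of $f$ rewrites as $f(w', \transRi{\reflRi, i})$; the left unit law $\transRi{\reflRi, i} = i$ then closes the case. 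Naturality of $\phi$ and $\psi$ in $P$ and in $w$ is a routine consequence of the same functor laws.

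The main subtle point, and the one I would be most careful about, is pinning down the correct proof-relevant formulation of $\CircFun{P}_w$: the naturality condition above is essential, for without it $\CircFun{P}_w$ is strictly larger than $\DiaFun{P}_w$ in general and the stated isomorphism already fails. With that condition in hand, the whole argument is an instance of the Yoneda lemma applied in the presheaf category $\Psh[\WCat]$, so no further appeal to forward confluence is needed beyond what is already baked into the functorial structure of $\DiaFun{P}$.
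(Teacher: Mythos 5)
Your argument is sound, and it is worth noting that the paper itself states this proposition without proof, as an observation in the related-work discussion; so there is no official proof to diverge from. Your Yoneda-style route is a perfectly good way to discharge it: the forward map is transport $\wkPsh{\DiaFun{P}}{i}{-}$ (which is where forward confluence enters, via $\factor$ inside the functorial action of $\DiaFun{P}$), the backward map is evaluation at $\reflRi$, and the two round trips collapse exactly by the coherence laws $\factor{\reflRi,m}=(m,\reflRi)$, $\wkPsh{P}{\reflRi}{p}=p$ and $\transRi{\reflRi,i}=i$ recorded for \SLC{}\hyp{}frames. You have also put your finger on the one genuinely delicate point: the proof-relevant reading of the global clause must build in the condition that the family $f$ is a morphism of presheaves out of the representable $\WCat(w,-)$ (equivalently, the uniformity condition you state); the naive translation $\forall w'.\, w \Ri w' \to \Sigma_v.\, w' \Rm v \times P_v$ without that condition is strictly larger than $\DiaFun{P}_w$ and the proposition would be false for it. Since the paper never pins down $\CircFun$ in the proof-relevant setting, making that definitional choice explicit, as you do, is exactly what a complete proof requires. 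The only cosmetic quibble is your closing remark that no appeal to forward confluence is needed: it is needed, just already packaged into $\wkPsh{\DiaFun{P}}{}{}$, as you yourself note earlier.
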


In modal logic, the forward confluence condition forces the
axiom~$\DiaTy{(A \lor B)} \implies \DiaTy{A} \lor \DiaTy{B}$ to be
true~\cite{BalbianiGGO24}, which does not hold generally for strong
functors.
This observation, however, presupposes that the satisfaction clause
for the disjunction connective is defined as follows:
\begin{equation*}
  \begin{array}{l@{\;\Vdash\;} @{\;} l @{\;\text{iff}\;}c@{\;} l}
    \Mod[M],w & A \lor B & & \Mod[M],w \Vdash A\ \text{or}\ \Mod[M],w \Vdash B
  \end{array}
\end{equation*}
This ``Kripke\hyp{}style'' interpretation of disjunction is not
suitable for our purposes given that our objective is to
constructively prove completeness for lambda calculi using
possible\hyp{}world semantics.
Completeness in the presence of sum types in lambda calculi is a
notorious matter~\cite{AltenkirchDHS01,FioreS99} that requires further
investigation in the presence of the lax modality.

\emph{Box modality in lax logic}. Fairtlough and Mendler~\cite{FairtloughM97} note
that ``there is no point'' in defining a $\BoxTy{}$ modality for \LL
since it ``yields nothing new''.
With the following standard extension of the satisfaction clause for
the $\BoxTy$ modality:
\begin{equation*}
  \begin{array}{l@{\;\Vdash\;} @{\;} l @{\;\text{iff}\;}c@{\;} l}
    \Mod[M],w & \BoxTy{A} & & \text{for all}\ w',v\ \text{s.t.}\ 
        w \Ri w'\ \text{and}\ w' \Rm v,\ \Mod[M],v \Vdash A
  \end{array}
\end{equation*}
it follows that $\Mod[M],w \Vdash A$ if and only if $\Mod[M],w \Vdash
\BoxTy{A}$ for an arbitrary model~$\Mod[M]$ of \LL, making the
connective~$\BoxTy{}$ a logically meaningless addition to \LL.

\emph{Proof\hyp{}relevant semantics}.
Alechina et al.~\cite{AlechinaMPR01} study a connection between
categorical and possible\hyp{}world models of lax logic.
They show that a \LL{}\hyp{}\emph{modal algebra} determines a
possible\hyp{}world model of \LL \cite[Theorem 4]{AlechinaMPR01} via
the Stone representation, and observe that a modal algebra is a
``thin'' categorical model, whose morphisms are given by the
partial\hyp{}order relation of the algebra.
This connection, while illuminating, does not satisfy an important
requirement motivating Mitchell and Moggi's~\cite{MitchellM91} work:
to construct models of lambda calculi by leveraging the
possible\hyp{}world semantics of the corresponding logic.
Our proof\hyp{}relevant possible\hyp{}world semantics satisfies this
requirement and is the key to constructing NbE models.

Kavvos~\cite{Kavvos24a,Kavvos24b} develops proof\hyp{}relevant
possible\hyp{}world semantics (calling it ``Two\hyp{}dimensional
Kripke semantics'') for the modal logic~IK$_{\BlackDiaTy{}}$ of Galois
connections due to Dzik et al~\cite{DzikJK10}, which corresponds to
the minimal Fitch\hyp{}style calculus~\cite{Borghuis94,Clouston18}.
Kavvos adopts a categorical perspective and shows that profunctors
determine an adjunction~$\BlackDiaTy{} \dashv \BoxTy$ on presheaves,
which can be used to model IK$_{\BlackDiaTy{}}$.
Kavvos' profunctor condition is the proof\hyp{}relevant refinement of
Sotirov's~\cite{Sotirov80} bimodule frame condition which states that
${\Ri} ; {\Rm} ; {\Ri} \subseteq {\Rm}$

Proof\hyp{}relevant possible\hyp{}world semantics and its connection
to NbE for modal lambda calculi is a novel consideration in our work.
Valliappan et al~\cite{ValliappanRC22} prove normalization for
Fitch\hyp{}style modal lambda calculi~\cite{Borghuis94,Clouston18},
consisting of the necessity modality~$\BoxTy{}$ and its left
adjoint~$\BlackDiaTy$ using possible\hyp{}world semantics with a
proof\hyp{}\emph{irrelevant} relation~$\Rm$.

\emph{Frame correspondence}.
The study of necessary and sufficient frame conditions for modal
axioms, known as \emph{frame correspondence}, appears to be tricky in
the proof\hyp{}relevant setting.
Plotkin and Stirling~\cite{PlotkinS86} prove a remarkably general
correspondence theorem (Theorem 2.1) that tells us that the
reflexivity of ${\Rm};{\Ri}^{-1}$ corresponds to axiom~\RA and
${\Rm}^2 \subseteq {\Rm};{\Ri}^{-1}$ corresponds to axiom~\JA.
We have not studied frame correspondence in this article, but leave it
as a matter for future work.
The categorical methods of Kavvos~\cite{Kavvos24a} might help here.


\bibliography{main}

@misc{Agda2,
  author       = {Abel, Andreas and
                  Allais, Guillaume and
                  Cockx, Jesper and
                  Danielsson, Nils Anders and
                  Hausmann, Philipp and
                  Nordvall Forsberg, Fredrik and
                  Norell, Ulf and
                  López Juan, Víctor and
                  Sicard-Ramírez, Andrés and
                  Vezzosi, Andrea},
  title        = {Agda 2},
  date         = {2005/},
  institution  = {Chalmers University of Technology and Gothenburg
                  University},
  url          = {https://wiki.portal.chalmers.se/agda/pmwiki.php},
  license      = {BSD3},
  repository   = {https://github.com/agda/agda}
}

@inproceedings{AlechinaMPR01,
  author    = {Natasha Alechina and
               Michael Mendler and
               Valeria de Paiva and
               Eike Ritter},
  editor    = {Laurent Fribourg},
  title     = {Categorical and Kripke Semantics for Constructive {S4} Modal Logic},
  booktitle = {Computer Science Logic, 15th International Workshop, {CSL} 2001. 10th
               Annual Conference of the EACSL, Paris, France, September 10-13, 2001,
               Proceedings},
  series    = {Lecture Notes in Computer Science},
  volume    = {2142},
  pages     = {292--307},
  publisher = {Springer},
  year      = {2001},
  url       = {https://doi.org/10.1007/3-540-44802-0\_21},
  doi       = {10.1007/3-540-44802-0\_21},
  timestamp = {Tue, 14 May 2019 10:00:42 +0200},
  biburl    = {https://dblp.org/rec/conf/csl/AlechinaMPR01.bib},
  bibsource = {dblp computer science bibliography, https://dblp.org}
}

@inproceedings{AltenkirchDHS01,
  author    = {Thorsten Altenkirch and
               Peter Dybjer and
               Martin Hofmann and
               Philip J. Scott},
  title     = {Normalization by Evaluation for Typed Lambda Calculus with Coproducts},
  booktitle = {16th Annual {IEEE} Symposium on Logic in Computer Science, Boston,
               Massachusetts, USA, June 16-19, 2001, Proceedings},
  pages     = {303--310},
  publisher = {{IEEE} Computer Society},
  year      = {2001},
  url       = {https://doi.org/10.1109/LICS.2001.932506},
  doi       = {10.1109/LICS.2001.932506},
  timestamp = {Fri, 24 Mar 2023 00:01:49 +0100},
  biburl    = {https://dblp.org/rec/conf/lics/AltenkirchDHS01.bib},
  bibsource = {dblp computer science bibliography, https://dblp.org}
}

@article{BahrGM19,
  author       = {Patrick Bahr and
                  Christian Graulund and
                  Rasmus Ejlers M{\o}gelberg},
  title        = {Simply RaTT: a fitch-style modal calculus for reactive programming
                  without space leaks},
  journal      = {Proc. {ACM} Program. Lang.},
  volume       = {3},
  number       = {{ICFP}},
  pages        = {109:1--109:27},
  year         = {2019},
  url          = {https://doi.org/10.1145/3341713},
  doi          = {10.1145/3341713},
  timestamp    = {Tue, 14 Oct 2025 19:46:53 +0200},
  biburl       = {https://dblp.org/rec/journals/pacmpl/BahrGM19.bib},
  bibsource    = {dblp computer science bibliography, https://dblp.org}
}

@inproceedings{BalbianiGGO24,
  author       = {Philippe Balbiani and
                  Han Gao and
                  {\c{C}}igdem Gencer and
                  Nicola Olivetti},
  editor       = {Aniello Murano and
                  Alexandra Silva},
  title        = {A Natural Intuitionistic Modal Logic: Axiomatization and Bi-Nested
                  Calculus},
  booktitle    = {32nd {EACSL} Annual Conference on Computer Science Logic, {CSL} 2024,
                  February 19-23, 2024, Naples, Italy},
  series       = {LIPIcs},
  volume       = {288},
  pages        = {13:1--13:21},
  publisher    = {Schloss Dagstuhl - Leibniz-Zentrum f{\"{u}}r Informatik},
  year         = {2024},
  url          = {https://doi.org/10.4230/LIPIcs.CSL.2024.13},
  doi          = {10.4230/LIPICS.CSL.2024.13},
  timestamp    = {Sun, 06 Oct 2024 20:59:03 +0200},
  biburl       = {https://dblp.org/rec/conf/csl/BalbianiGGO24.bib},
  bibsource    = {dblp computer science bibliography, https://dblp.org}
}

@article{BentonBP98,
  author    = {P. N. Benton and
               Gavin M. Bierman and
               Valeria de Paiva},
  title     = {Computational Types from a Logical Perspective},
  journal   = {J. Funct. Program.},
  volume    = {8},
  number    = {2},
  pages     = {177--193},
  year      = {1998},
  url       = {https://doi.org/10.1017/s0956796898002998},
  doi       = {10.1017/s0956796898002998},
  timestamp = {Tue, 16 Aug 2022 23:04:53 +0200},
  biburl    = {https://dblp.org/rec/journals/jfp/BentonBP98.bib},
  bibsource = {dblp computer science bibliography, https://dblp.org}
}

@inproceedings{BergerS91,
  author    = {Ulrich Berger and
               Helmut Schwichtenberg},
  title     = {An Inverse of the Evaluation Functional for Typed lambda-calculus},
  booktitle = {Proceedings of the Sixth Annual Symposium on Logic in Computer Science
               {(LICS} '91), Amsterdam, The Netherlands, July 15-18, 1991},
  pages     = {203--211},
  publisher = {{IEEE} Computer Society},
  year      = {1991},
  url       = {https://doi.org/10.1109/LICS.1991.151645},
  doi       = {10.1109/LICS.1991.151645},
  timestamp = {Fri, 24 Mar 2023 00:01:49 +0100},
  biburl    = {https://dblp.org/rec/conf/lics/BergerS91.bib},
  bibsource = {dblp computer science bibliography, https://dblp.org}
}

@article{BergerES98,
  title     = {Normalization by evaluation},
  author    = {Berger, Ulrich and Eberl, Matthias and Schwichtenberg, Helmut},
  journal   = {Prospects for Hardware Foundations: ESPRIT Working Group 8533 NADA—New Hardware Design Methods Survey Chapters},
  pages     = {117--137},
  year      = {1998},
  publisher = {Springer}
}

@InProceedings{BizjakGCMB16,
  author="Bizjak, Ale{\v{s}}
  and Grathwohl, Hans Bugge
  and Clouston, Ranald
  and M{\o}gelberg, Rasmus E.
  and Birkedal, Lars",
  editor="Jacobs, Bart
  and L{\"o}ding, Christof",
  title="Guarded Dependent Type Theory with Coinductive Types",
  booktitle="Foundations of Software Science and Computation Structures",
  year="2016",
  publisher="Springer Berlin Heidelberg",
  address="Berlin, Heidelberg",
  pages="20--35"
}

@book{Borghuis94,
    AUTHOR = {Borghuis, Valentijn Anton Johan},
     TITLE = {Coming to terms with modal logic},
      NOTE = {On the interpretation of modalities in typed
              $\lambda$-calculus,
              Dissertation, Technische Universiteit Eindhoven, Eindhoven,
              1994},
 PUBLISHER = {Technische Universiteit Eindhoven, Eindhoven},
      YEAR = {1994},
     PAGES = {x+219},
   MRCLASS = {03B45 (03B40 68T15 68T27)},
  MRNUMBER = {1303590},
MRREVIEWER = {Neculai\ Curteanu},
}

@article{BozicD84,
  author = {Bo{\v z}i{\'c}, Milan and Do{\v s}en, Kosta},
  title = {Models for normal intuitionistic modal logics},
  date = {1984/09/01},
  date-added = {2025-02-13 19:02:24 +0000},
  date-modified = {2025-02-13 19:02:53 +0000},
  doi = {10.1007/BF02429840},
  id = {Bo{\v z}i{\'c}1984},
  isbn = {1572-8730},
  journal = {Studia Logica},
  number = {3},
  pages = {217--245},
  read = {0},
  url = {https://doi.org/10.1007/BF02429840},
  volume = {43},
  year = {1984},
  bdsk-url-1 = {https://doi.org/10.1007/BF02429840}
}

@inproceedings{Clouston18,
  author       = {Ranald Clouston},
  editor       = {Christel Baier and
                  Ugo Dal Lago},
  title        = {Fitch-Style Modal Lambda Calculi},
  booktitle    = {Foundations of Software Science and Computation Structures - 21st
                  International Conference, {FOSSACS} 2018, Held as Part of the European
                  Joint Conferences on Theory and Practice of Software, {ETAPS} 2018,
                  Thessaloniki, Greece, April 14-20, 2018, Proceedings},
  series       = {Lecture Notes in Computer Science},
  volume       = {10803},
  pages        = {258--275},
  publisher    = {Springer},
  year         = {2018},
  url          = {https://doi.org/10.1007/978-3-319-89366-2\_14},
  doi          = {10.1007/978-3-319-89366-2\_14},
  timestamp    = {Sun, 04 Aug 2024 19:40:23 +0200},
  biburl       = {https://dblp.org/rec/conf/fossacs/Clouston18.bib},
  bibsource    = {dblp computer science bibliography, https://dblp.org}
}

@inproceedings{Coquand93,
  author       = {Catarina Coquand},
  editor       = {Egon B{\"{o}}rger and
                  Yuri Gurevich and
                  Karl Meinke},
  title        = {From Semantics to Rules: {A} Machine Assisted Analysis},
  booktitle    = {Computer Science Logic, 7th Workshop, {CSL} '93, Swansea, United Kingdom,
                  September 13-17, 1993, Selected Papers},
  series       = {Lecture Notes in Computer Science},
  volume       = {832},
  pages        = {91--105},
  publisher    = {Springer},
  year         = {1993},
  url          = {https://doi.org/10.1007/BFb0049326},
  doi          = {10.1007/BFB0049326},
  timestamp    = {Fri, 17 Jul 2020 16:12:45 +0200},
  biburl       = {https://dblp.org/rec/conf/csl/Coquand93.bib},
  bibsource    = {dblp computer science bibliography, https://dblp.org}
}

@article{Coquand02,
  author    = {Catarina Coquand},
  title     = {A Formalised Proof of the Soundness and Completeness of a Simply Typed
               Lambda-Calculus with Explicit Substitutions},
  journal   = {High. Order Symb. Comput.},
  volume    = {15},
  number    = {1},
  pages     = {57--90},
  year      = {2002},
  url       = {https://doi.org/10.1023/A:1019964114625},
  doi       = {10.1023/A:1019964114625},
  timestamp = {Thu, 05 Mar 2020 12:04:59 +0100},
  biburl    = {https://dblp.org/rec/journals/lisp/Coquand02.bib},
  bibsource = {dblp computer science bibliography, https://dblp.org}
}

@inproceedings{Ahman23,
  author       = {Danel Ahman},
  editor       = {Orna Kupferman and
                  Pawel Sobocinski},
  title        = {When Programs Have to Watch Paint Dry},
  booktitle    = {Foundations of Software Science and Computation Structures - 26th
                  International Conference, FoSSaCS 2023, Held as Part of the European
                  Joint Conferences on Theory and Practice of Software, {ETAPS} 2023,
                  Paris, France, April 22-27, 2023, Proceedings},
  series       = {Lecture Notes in Computer Science},
  volume       = {13992},
  pages        = {1--23},
  publisher    = {Springer},
  year         = {2023},
  url          = {https://doi.org/10.1007/978-3-031-30829-1\_1},
  doi          = {10.1007/978-3-031-30829-1\_1},
  timestamp    = {Sat, 13 May 2023 01:07:20 +0200},
  biburl       = {https://dblp.org/rec/conf/fossacs/Ahman23.bib},
  bibsource    = {dblp computer science bibliography, https://dblp.org}
}

@inproceedings{DasM23,
  author       = {Anupam Das and
                  Sonia Marin},
  editor       = {Revantha Ramanayake and
                  Josef Urban},
  title        = {On Intuitionistic Diamonds (and Lack Thereof)},
  booktitle    = {Automated Reasoning with Analytic Tableaux and Related Methods - 32nd
                  International Conference, {TABLEAUX} 2023, Prague, Czech Republic,
                  September 18-21, 2023, Proceedings},
  series       = {Lecture Notes in Computer Science},
  volume       = {14278},
  pages        = {283--301},
  publisher    = {Springer},
  year         = {2023},
  url          = {https://doi.org/10.1007/978-3-031-43513-3\_16},
  doi          = {10.1007/978-3-031-43513-3\_16},
  timestamp    = {Wed, 01 Nov 2023 08:59:02 +0100},
  biburl       = {https://dblp.org/rec/conf/tableaux/DasM23.bib},
  bibsource    = {dblp computer science bibliography, https://dblp.org}
}

@article{DaviesP01,
  author = {Davies, Rowan and Pfenning, Frank},
  title = {A modal analysis of staged computation},
  year = {2001},
  issue_date = {May 2001},
  publisher = {Association for Computing Machinery},
  address = {New York, NY, USA},
  volume = {48},
  number = {3},
  issn = {0004-5411},
  url = {https://doi.org/10.1145/382780.382785},
  doi = {10.1145/382780.382785},
  journal = {J. ACM},
  month = may,
  pages = {555–604},
  numpages = {50},
  keywords = {staged computation, run-time code generation, binding times}
}

@inproceedings{GrootSC25,
  author       = {Jim de Groot and
                  Ian Shillito and
                  Ranald Clouston},
  title        = {Semantical Analysis of Intuitionistic Modal Logics between {CK} and
                  {IK}},
  booktitle    = {40th Annual {ACM/IEEE} Symposium on Logic in Computer Science, {LICS}
                  2025, Singapore, June 23-26, 2025},
  pages        = {169--182},
  publisher    = {{IEEE}},
  year         = {2025},
  url          = {https://doi.org/10.1109/LICS65433.2025.00020},
  doi          = {10.1109/LICS65433.2025.00020},
  timestamp    = {Tue, 21 Oct 2025 20:15:58 +0200},
  biburl       = {https://dblp.org/rec/conf/lics/GrootSC25.bib},
  bibsource    = {dblp computer science bibliography, https://dblp.org}
}

@article{DzikJK10,
  author       = {Wojciech Dzik and
                  Jouni J{\"{a}}rvinen and
                  Michiro Kondo},
  title        = {Intuitionistic propositional logic with Galois connections},
  journal      = {Log. J. {IGPL}},
  volume       = {18},
  number       = {6},
  pages        = {837--858},
  year         = {2010},
  url          = {https://doi.org/10.1093/jigpal/jzp057},
  doi          = {10.1093/JIGPAL/JZP057},
  timestamp    = {Fri, 06 Mar 2020 22:00:14 +0100},
  biburl       = {https://dblp.org/rec/journals/igpl/DzikJK10.bib},
  bibsource    = {dblp computer science bibliography, https://dblp.org}
}

@article{FairtloughM97,
  author    = {Matt Fairtlough and
               Michael Mendler},
  title     = {Propositional Lax Logic},
  journal   = {Inf. Comput.},
  volume    = {137},
  number    = {1},
  pages     = {1--33},
  year      = {1997},
  url       = {https://doi.org/10.1006/inco.1997.2627},
  doi       = {10.1006/inco.1997.2627},
  timestamp = {Fri, 12 Feb 2021 22:16:01 +0100},
  biburl    = {https://dblp.org/rec/journals/iandc/FairtloughM97.bib},
  bibsource = {dblp computer science bibliography, https://dblp.org}
}

@inproceedings{FioreS99,
  author       = {Marcelo P. Fiore and
                  Alex K. Simpson},
  editor       = {Jean{-}Yves Girard},
  title        = {Lambda Definability with Sums via Grothendieck Logical Relations},
  booktitle    = {Typed Lambda Calculi and Applications, 4th International Conference,
                  TLCA'99, L'Aquila, Italy, April 7-9, 1999, Proceedings},
  series       = {Lecture Notes in Computer Science},
  volume       = {1581},
  pages        = {147--161},
  publisher    = {Springer},
  year         = {1999},
  url          = {https://doi.org/10.1007/3-540-48959-2\_12},
  doi          = {10.1007/3-540-48959-2\_12},
  timestamp    = {Tue, 14 May 2019 10:00:41 +0200},
  biburl       = {https://dblp.org/rec/conf/tlca/FioreS99.bib},
  bibsource    = {dblp computer science bibliography, https://dblp.org}
}

@article{Servi77,
  author = {Fischer Servi, Gis{\`e}le},
  date = {1977/09/01},
  date-added = {2025-06-30 16:15:26 +0100},
  date-modified = {2025-06-30 16:15:26 +0100},
  doi = {10.1007/BF02121259},
  id = {Fischer Servi1977},
  isbn = {1572-8730},
  journal = {Studia Logica},
  number = {3},
  pages = {141--149},
  title = {On modal logic with an intuitionistic base},
  url = {https://doi.org/10.1007/BF02121259},
  volume = {36},
  year = {1977},
  bdsk-url-1 = {https://doi.org/10.1007/BF02121259}
}

@inbook{Servi81,
  author       = "Servi, Gis{\`e}le Fischer",
  editor       = "Dalla Chiara, Maria Luisa",
  title        = "Semantics for a Class of Intuitionistic Modal Calculi",
  bookTitle    = "Italian Studies in the Philosophy of Science",
  year         = "1981",
  publisher    = "Springer Netherlands",
  address      = "Dordrecht",
  pages        = "59--72",
  doi          = "10.1007/978-94-009-8937-5_5",
  url          = "https://doi.org/10.1007/978-94-009-8937-5_5"
}

@inproceedings{Sotirov80,
  title={Modal theories with intuitionistic logic},
  author={Sotirov, Vladimir H},
  booktitle={Proceedings of the Conference on Mathematical Logic, Sofia},
  pages={139--171},
  year={1980}
}

@InProceedings{HuP24,
  author="Hu, Jason Z. S.
  and Pientka, Brigitte",
  editor="Weirich, Stephanie",
  title="Layered Modal Type Theory",
  booktitle="Programming Languages and Systems",
  year="2024",
  publisher="Springer Nature Switzerland",
  address="Cham",
  pages="52--82"
}

@inproceedings{Kavvos17,
  author       = {G. A. Kavvos},
  title        = {Dual-context calculi for modal logic},
  booktitle    = {32nd Annual {ACM/IEEE} Symposium on Logic in Computer Science, {LICS}
                  2017, Reykjavik, Iceland, June 20-23, 2017},
  pages        = {1--12},
  publisher    = {{IEEE} Computer Society},
  year         = {2017},
  url          = {https://doi.org/10.1109/LICS.2017.8005089},
  doi          = {10.1109/LICS.2017.8005089},
  timestamp    = {Fri, 24 Mar 2023 00:01:50 +0100},
  biburl       = {https://dblp.org/rec/conf/lics/Kavvos17.bib},
  bibsource    = {dblp computer science bibliography, https://dblp.org}
}

@article{Kavvos20,
  author       = {G. A. Kavvos},
  title        = {Dual-Context Calculi for Modal Logic},
  journal      = {Log. Methods Comput. Sci.},
  volume       = {16},
  number       = {3},
  year         = {2020},
  url          = {https://lmcs.episciences.org/6722},
  doi          = {10.23638/LMCS-16(3:10)2020},
  timestamp    = {Wed, 07 Dec 2022 23:03:41 +0100},
  biburl       = {https://dblp.org/rec/journals/lmcs/Kavvos20.bib},
  bibsource    = {dblp computer science bibliography, https://dblp.org}
}

@inproceedings{Kavvos24a,
  author       = {Georgios Alexandros Kavvos},
  editor       = {Jakob Rehof},
  title        = {Two-Dimensional Kripke Semantics {I:} Presheaves},
  booktitle    = {9th International Conference on Formal Structures for Computation
                  and Deduction, {FSCD} 2024, Tallinn, Estonia, July 10-13, 2024},
  series       = {LIPIcs},
  volume       = {299},
  pages        = {14:1--14:23},
  publisher    = {Schloss Dagstuhl - Leibniz-Zentrum f{\"{u}}r Informatik},
  year         = {2024},
  url          = {https://doi.org/10.4230/LIPIcs.FSCD.2024.14},
  doi          = {10.4230/LIPICS.FSCD.2024.14},
  timestamp    = {Fri, 21 Nov 2025 23:44:11 +0100},
  biburl       = {https://dblp.org/rec/conf/fscd/Kavvos24.bib},
  bibsource    = {dblp computer science bibliography, https://dblp.org}
}

@article{Kavvos24b,
  author       = {G. A. Kavvos},
  title        = {Two-dimensional Kripke Semantics {II:} Stability and Completeness},
  journal      = {CoRR},
  volume       = {abs/2406.03578},
  year         = {2024},
  url          = {https://doi.org/10.48550/arXiv.2406.03578},
  doi          = {10.48550/ARXIV.2406.03578},
  eprinttype    = {arXiv},
  eprint       = {2406.03578},
  timestamp    = {Fri, 05 Jul 2024 16:54:14 +0200},
  biburl       = {https://dblp.org/rec/journals/corr/abs-2406-03578.bib},
  bibsource    = {dblp computer science bibliography, https://dblp.org}
}

@inproceedings{MagnussonN93,
  author       = {Lena Magnusson and
                  Bengt Nordstr{\"{o}}m},
  editor       = {Henk Barendregt and
                  Tobias Nipkow},
  title        = {The {ALF} Proof Editor and Its Proof Engine},
  booktitle    = {Types for Proofs and Programs, International Workshop TYPES'93, Nijmegen,
                  The Netherlands, May 24-28, 1993, Selected Papers},
  series       = {Lecture Notes in Computer Science},
  volume       = {806},
  pages        = {213--237},
  publisher    = {Springer},
  year         = {1993},
  url          = {https://doi.org/10.1007/3-540-58085-9\_78},
  doi          = {10.1007/3-540-58085-9\_78},
  timestamp    = {Tue, 14 May 2019 10:00:42 +0200},
  biburl       = {https://dblp.org/rec/conf/types/MagnussonN93.bib},
  bibsource    = {dblp computer science bibliography, https://dblp.org}
}

@article{MitchellM91,
  author    = {John C. Mitchell and
               Eugenio Moggi},
  title     = {Kripke-Style Models for Typed lambda Calculus},
  journal   = {Ann. Pure Appl. Log.},
  volume    = {51},
  number    = {1-2},
  pages     = {99--124},
  year      = {1991},
  url       = {https://doi.org/10.1016/0168-0072(91)90067-V},
  doi       = {10.1016/0168-0072(91)90067-V},
  timestamp = {Fri, 21 Feb 2020 21:18:34 +0100},
  biburl    = {https://dblp.org/rec/journals/apal/MitchellM91.bib},
  bibsource = {dblp computer science bibliography, https://dblp.org}
}

@article{Moggi91,
  author    = {Eugenio Moggi},
  title     = {Notions of Computation and Monads},
  journal   = {Inf. Comput.},
  volume    = {93},
  number    = {1},
  pages     = {55--92},
  year      = {1991},
  url       = {https://doi.org/10.1016/0890-5401(91)90052-4},
  doi       = {10.1016/0890-5401(91)90052-4},
  timestamp = {Fri, 12 Feb 2021 22:16:40 +0100},
  biburl    = {https://dblp.org/rec/journals/iandc/Moggi91.bib},
  bibsource = {dblp computer science bibliography, https://dblp.org}
}

@article{NanevskiPP08,
  author       = {Aleksandar Nanevski and
                  Frank Pfenning and
                  Brigitte Pientka},
  title        = {Contextual modal type theory},
  journal      = {{ACM} Trans. Comput. Log.},
  volume       = {9},
  number       = {3},
  pages        = {23:1--23:49},
  year         = {2008},
  url          = {https://doi.org/10.1145/1352582.1352591},
  doi          = {10.1145/1352582.1352591},
  timestamp    = {Tue, 06 Nov 2018 12:51:53 +0100},
  biburl       = {https://dblp.org/rec/journals/tocl/NanevskiPP08.bib},
  bibsource    = {dblp computer science bibliography, https://dblp.org}
}

@inproceedings{PlotkinS86,
  author    = {Gordon D. Plotkin and
               Colin Stirling},
  editor    = {Joseph Y. Halpern},
  title     = {A Framework for Intuitionistic Modal Logics},
  booktitle = {Proceedings of the 1st Conference on Theoretical Aspects of Reasoning
               about Knowledge, Monterey, CA, USA, March 1986},
  pages     = {399--406},
  publisher = {Morgan Kaufmann},
  year      = {1986},
  timestamp = {Thu, 21 Jun 2018 07:50:24 +0200},
  biburl    = {https://dblp.org/rec/conf/tark/PlotkinS86.bib},
  bibsource = {dblp computer science bibliography, https://dblp.org}
}

@article{PfenningD01,
  author       = {Frank Pfenning and
                  Rowan Davies},
  title        = {A judgmental reconstruction of modal logic},
  journal      = {Math. Struct. Comput. Sci.},
  volume       = {11},
  number       = {4},
  pages        = {511--540},
  year         = {2001},
  url          = {https://doi.org/10.1017/S0960129501003322},
  doi          = {10.1017/S0960129501003322},
  timestamp    = {Wed, 01 Apr 2020 08:48:43 +0200},
  biburl       = {https://dblp.org/rec/journals/mscs/PfenningD01.bib},
  bibsource    = {dblp computer science bibliography, https://dblp.org}
}

@phdthesis{Simpson94a,
  author    = {Alex K. Simpson},
  title     = {The proof theory and semantics of intuitionistic modal logic},
  school    = {University of Edinburgh, {UK}},
  year      = {1994},
  url       = {https://hdl.handle.net/1842/407},
  timestamp = {Wed, 04 May 2022 12:59:19 +0200},
  biburl    = {https://dblp.org/rec/phd/ethos/Simpson94a.bib},
  bibsource = {dblp computer science bibliography, https://dblp.org}
}

@misc{Valliappan25,
  author       = {Nachiappan Valliappan},
  title        = {Supplementary material for Lax Modal Lambda Calculi},
  date         = {2025},
  url          = {https://github.com/nachivpn/s},
  repository   = {https://github.com/nachivpn/s}
}

@article{ValliappanRC22,
  author    = {Nachiappan Valliappan and
               Fabian Ruch and
               Carlos {Tom{'{e}} Corti{~{n}}as}},
  title     = {Normalization for fitch-style modal calculi},
  journal   = {Proc. {ACM} Program. Lang.},
  volume    = {6},
  number    = {{ICFP}},
  pages     = {772--798},
  year      = {2022},
  url       = {https://doi.org/10.1145/3547649},
  doi       = {10.1145/3547649},
  timestamp = {Tue, 18 Oct 2022 22:18:46 +0200},
  biburl    = {https://dblp.org/rec/journals/pacmpl/ValliappanRC22.bib},
  bibsource = {dblp computer science bibliography, https://dblp.org}
}

@article{Wijesekera90,
  author       = {Duminda Wijesekera},
  title        = {Constructive Modal Logics {I}},
  journal      = {Ann. Pure Appl. Log.},
  volume       = {50},
  number       = {3},
  pages        = {271--301},
  year         = {1990},
  url          = {https://doi.org/10.1016/0168-0072(90)90059-B},
  doi          = {10.1016/0168-0072(90)90059-B},
  timestamp    = {Fri, 21 Feb 2020 21:18:27 +0100},
  biburl       = {https://dblp.org/rec/journals/apal/Wijesekera90.bib},
  bibsource    = {dblp computer science bibliography, https://dblp.org}
}

\appendix

\section{Definitions of strong functors}
\label{app:def}

A \emph{strong} functor~$\FFun : \Cat \to \Cat$ for a cartesian
category~$\Cat$ is an endofunctor on $\Cat$ with a natural
transformation~$\strength_{P,Q} : P \Prod \FFun Q \to \FFun(P \Prod
Q)$ natural in $\Cat$\hyp{}objects~$P$ and~$Q$ such that the following
diagrams stating coherence conditions commute:
\begin{center}
\begin{tikzcd}
  \Term \Prod \FFun{P} \ar{ddr}{\prj_2} \ar{rr}{\strength_{\Term,P}} && \FFun{(\Term\Prod P)} \ar{ddl}[swap]{\FFun{\prj_2}}\\ \\
  & \FFun{P}
\end{tikzcd}
\begin{tikzcd}
  (P\Prod Q) \Prod \FFun{R} \ar{dd}{\assoc_{P,Q,\FFun{R}}} \ar{rr}{\strength_{P\Prod Q, R}} && \FFun{((P\Prod Q)\Prod R)} \ar{dd}{\FFun{\assoc_{P,Q,R}}} \\ \\
  P \Prod (Q\Prod \FFun{R}) \ar{r}{\id_{P}\Prod \strength_{Q,R}} & P\Prod
  \FFun{(Q\Prod R)} \ar{r}{\strength_{P,Q\Prod R}} & \FFun{(P\Prod(Q\Prod R))}
\end{tikzcd}
\end{center}
Observe that the terminal object~$\Term$, the projection
morphism~$\prj_2 : P \Prod Q \to Q$ and the associator morphism
~$\alpha_{P,Q,R} : (P \Prod Q) \Prod R \to P \Prod (Q \Prod R)$ (for
all $\Cat$\hyp{}objects~$P,Q,R$) live in the cartesian
category~$\Cat$.

A \emph{pointed} functor~$\FFun : \Cat \to \Cat$ on a category~$\Cat$
is an endofunctor on $\Cat$ equipped with a natural
transformation~$\point : \Id \natto \FFun$ from the identity functor
$\Id$ on $\Cat$.

A strong and pointed functor~$\FFun$ is said to be \emph{strong
  pointed}, when it satisfies an additional coherence condition
that~$\point$ is a strong natural transformation, meaning that the
following diagram stating a coherence condition commutes:
\begin{center}
\begin{tikzcd}
  & P \Prod Q \ar{ddl}[swap]{\id_{P}\Prod\point_Q} \ar{ddr}{\point_{P\Prod Q}} \\ \\
  P\Prod \FFun{Q} \ar{rr}{\strength_{P,Q}} && \FFun{(P\Prod Q)}
\end{tikzcd}
\end{center}

A \emph{semimonad}~$\FFun : \Cat \to \Cat$, or \emph{joinable}
functor, on a category~$\Cat$ is an endofunctor on $\Cat$ that forms a
semigroup in the sense that it is equipped with a ``multiplication''
natural transformation~$\join : \FFun^2 \natto \FFun$ that is
``associative'' as
$\join_{\PshObj{P}} \after \join_{\PshObj{\FFun \PshObj{P}}} =
\join_{\PshObj{P}} \after \FFun (\join_{\PshObj{P}}) : \FFun^3
\PshObj{P} \to \FFun \PshObj{P}$.

A strong functor~$\FFun$ that is also a semimonad is a \emph{strong
semimonad} when $\join$ is a strong natural transformation, meaning
that the following coherence condition diagram commutes:

\begin{center}
\begin{tikzcd}
  \PshObj{P}\Prod \FFun{\FFun{\PshObj{Q}}} \ar{dd}{\id_{\PshObj{P}}\Prod\mu_\PshObj{Q}} \ar{r}{\strength_{\PshObj{P},\FFun{\PshObj{Q}}}} & \FFun{(\PshObj{P}\Prod \FFun{\PshObj{Q}})} \ar{r}{\FMap{\strength_{\PshObj{P},\PshObj{Q}}}} & \FFun{\FFun{(\PshObj{P}\Prod \PshObj{Q})}} \ar{dd}{\mu_{\PshObj{P}\Prod \PshObj{Q}}} \\ \\
  \PshObj{P}\Prod \FFun{\PshObj{Q}}
  \ar{rr}{\strength_{\PshObj{P},\PshObj{Q}}} && \FFun{(\PshObj{P}\Prod
    \PshObj{Q})}
\end{tikzcd}
\end{center}

A \emph{monad}~$\FFun : \Cat \to \Cat$ on a category~$\Cat$ is a
semimonad that is pointed, such that the natural
transformation~$\point : \Id \natto \FFun$ is the left and right unit
of multiplication~$\join : \FFun^2 \natto \FFun$ in the sense that
$\join_{\PshObj{P}} \after \FMap{\point_{\PshObj{P}}} =
\id_{\FFun{\PshObj{P}}}$ and
$\join_{\PshObj{P}} \after \point_{\FFun{\PshObj{P}}} =
\id_{\FFun{\PshObj{P}}}$ for some $\Cat$\hyp{}object~$P$.

A strong functor~$\FFun$ that is also a monad is a \emph{strong monad}
when the natural transformations~$\point$ and~$\join$ of the monad are
both strong natural transformations, making $\FFun$ both a strong
pointed functor and a strong semimonad.

\section{Auxiliary definitions}

\begin{definition}[Context Inclusion]
The relation~$\Wk{}{}$ is defined inductively on contexts:
\begin{mathpar} 
  \inferrule[]{%
    \  
  }{%
      \baseWk{} : \Wk{\EmptyCtx}{\EmptyCtx}
  }\
  
  \inferrule[]{%
    i : \Wk{\Gamma}{\Gamma'}\\ (x\ \text{not in}\ \Gamma')
  }{%
    \dropWk_A{i} : \Wk{\Gamma}{\ExtCtx{\Gamma'}{x : A}}
  }
  
  \inferrule[]{%
    i : \Wk{\Gamma}{\Gamma'}\\ (x\ \text{not in}\ \Gamma')
  }{%
    \keepVarWk_A{i} :\Wk{\ExtCtx{\Gamma}{x : A}}{\ExtCtx{\Gamma'}{x : A}}
  }
\end{mathpar}
\end{definition}

\begin{itemize}
\item The relation $\Wk{}{}$ is reflexive and transitive, as witnessed by functions:
\begin{align*}
  & \begin{array}{l @{\;}c@{\;} l}
    \multicolumn{3}{l}{\idWk :  \forall \Gamma.\, \Wk{\Gamma}{\Gamma}} \\
    \idWk_{\EmptyCtx}         & = & \baseWk{} \\
    \idWk_{\ExtCtx{\Gamma}{x : A}} & = & \keepVarWk_A{\idWk_{\Gamma}}
  \end{array}
  & \begin{array}{l l l @{\;}c@{\;} l}
    \multicolumn{5}{l}{\transWk : \forall \Gamma,\Gamma',\Gamma''.\, \Wk{\Gamma}{\Gamma'} \to \Wk{\Gamma'}{\Gamma''} \to \Wk{\Gamma}{\Gamma''}} \\
    \transWk & i               & \baseWk{}        & = & i \\
    \transWk & i               & (\dropWk{i'})    & = & \dropWk{(\transWk{i, i'})} \\
    \transWk & (\dropWk{i})    & (\keepVarWk{i'}) & = & \dropWk{(\transWk{i,i'})} \\
    \transWk & (\keepVarWk{i}) & (\keepVarWk{i'}) & = & \keepVarWk{(\transWk{i,i'})}
  \end{array}
\end{align*}
\item The function $\factor$ for the NbE model of \SLC is defined as:
\begin{equation*}
  \begin{array}{l l @{\;}c@{\;} l}
    \multicolumn{4}{l}{\factor : \forall \Gamma,\Gamma',\Delta.\, \Wk{\Gamma}{\Gamma'} \to \RSL{\Gamma}{\Delta} \to \exists
  \Delta'.\, (\RSL{\Gamma'}{\Delta'} \times \Wk{\Delta}{\Delta'})} \\
    \factor & i & (\varExtOnce{n}) = & \pair{\varExtOnce{(\wkTm{n})}}{\keepVarWk{i}}
   \end{array}
 \end{equation*}
\item The function $\incl$ for the NbE model of \SLC is defined as:
\begin{equation*}
  \begin{array}{l @{\;}c@{\;} l}
    \multicolumn{3}{l}{\incl : \forall \Gamma, \Delta.\, \RSL{\Gamma}{\Delta} \to \Wk{\Gamma}{\Delta}} \\
    \incl & (\varExtOnce{(n : \Ne{\Gamma}{\DiaTy{A}})}) = & \dropWk_A{\idWk_\Gamma}
   \end{array}
 \end{equation*}

\end{itemize}

\end{document}